\documentclass[11pt,reqno]{amsart}


\date{\today}


\usepackage[ansinew]{inputenc}
\usepackage{textcomp}
\usepackage{cite}

\usepackage{amsthm}
\usepackage{amssymb}
\usepackage{amsfonts}

\newtheorem{theorem}{Theorem}[section]
\newtheorem{proposition}[theorem]{Proposition}
\newtheorem{lemma}[theorem]{Lemma}
\newtheorem{corollary}[theorem]{Corollary}

\theoremstyle{definition}

\theoremstyle{definition}
\newtheorem*{rem}{Remark}

\numberwithin{equation}{section}



\setlength{\voffset}{-.6truein}
\setlength{\textheight}{8.9truein}
\setlength{\textwidth}{6.05truein}
\setlength{\hoffset}{-.7truein}

\newcommand{\ac}{\textnormal{ac}}

\newcommand{\C}{\mathbb{C}}

\newcommand{\da}{\downarrow}
\newcommand{\E}{\mathbb{E}}

\newcommand{\lk}{\left(}

\newcommand{\N}{\mathbb{N}}
\newcommand{\p}{\mathbb{P}}

\newcommand{\R}{\mathbb{R}}
\newcommand{\rk}{\right)}
\newcommand{\s}{\textnormal{sc}}

\newcommand{\tr}{\textnormal{Tr}}

\DeclareMathOperator*{\esssup}{ess\,sup}
\DeclareMathOperator{\supp}{supp}


\begin{document}

\title[Density of states and eigenvectors on  random regular graphs]{Convergence of the density of states and delocalization of eigenvectors  \\ on random regular graphs}

\author[Leander Geisinger]{Leander Geisinger}
 \address{Leander Geisinger,  Princeton University, Princeton, NJ 08544, USA}
\email{leander.geisinger@gmail.com}

\thanks{\textit{AMS 2000 subject classifications:} primary 60B20; secondary 05C80, 35P20. 
\textit{Keywords and phrases:} Random regular graph, random Schr\"odinger operator, density of states, delocalization of eigenvectors, local spectral distribution.}

\begin{abstract}
Consider a random regular graph of fixed degree $d$ with $n$ vertices. We study spectral properties of the adjacency matrix and of random Schr\"odinger operators on such a graph as $n$ tends to infinity. 

We prove that the integrated density of states on the graph converges to the integrated density of states on the infinite regular tree and we give uniform bounds on the rate of convergence. This allows to estimate the number of eigenvalues in intervals of size comparable to $\log_{d-1}^{-1}(n)$.
Based on related estimates for the Green function we derive results about delocalization of eigenvectors.
\end{abstract}

\maketitle


\section{	Introduction}

In his seminal work from 1955, Wigner showed that the density of states of random matrices with independent identically distributed entries converges -- as the size of the matrix tends to infinity -- to a universal deterministic probability measure, the semicircle law \cite{Wig55}.
This result was gradually improved and today it is known that universality of spectral properties of Wigner matrices goes far beyond that.  Even the local eigenvalue statistics, studied via the eigenvalue gap distribution, is given by universal laws \cite{ErdPecRamSchYau10,TaoVu11,ErdRamSchTaoVauYau10,ErdYauYin12}. These laws can be calculated explicitly from Gaussian ensembles and are characterized by local repulsion of the eigenvalues. We refer to the textbooks \cite{Meh04,BaiSil10,AndGuiZei10} and the  review \cite{Dei07} for general results about random matrices and further references.

One field where random matrices arise is the study of random graphs. Along with a graph of $n$ labeled vertices one considers the adjacency matrix which is the symmetric $n \times n$ matrix with entry $ij$ equal to $1$, if vertex $i$ is connected by an edge to vertex $j$, and equal to $0$ otherwise. 
The spectrum of this matrix bears information  about the geometry of the graph, however determining spectral properties is often a difficult task. 

For Erd\H{o}s-R\'enyi graphs, where every edge is chosen independently with probability $p$ (see \cite{Bol01} for details concerning  random graph models),  universality of the local eigenvalue statistics for the corresponding adjacency matrix has recently been proved under suitable conditions on $p$ \cite{ErdKnoYauYin13,ErdKnoYauYin12}.
Two of the steps towards universality are proving local convergence of the density of states and delocalization of eigenvectors.  The former refers to the fact that the average number of eigenvalues in small intervals converges to a universal law.  (Ideally this holds for intervals of size comparable to $1/n$, up to logarithmic corrections.) The latter means that eigenvectors are typically uniformly distributed over the whole graph.
Let us emphasize that the mentioned results, for Wigner matrices as well as for Erd\H{o}s-R\'enyi graphs, rely on independence of the entries.

In this note we investigate the spectrum on random regular graphs with fixed degree $d$. In a regular graph of degree $d$ each vertex is connected to $d$ other vertices. Thus in the corresponding adjacency matrix each row and each column contains $d$ entries that are $1$ and $n-d$ entries that are $0$. Such a graph, or equivalently such a matrix, is chosen at random with uniform probability. These random matrices are sparse: only few entries are non-zero so that moments of the distribution of the entries decay slowly. More importantly the entries lack independence. Therefore it is not clear how to apply the methods developed to study Wigner matrices and Erd\H{o}s-R\'enyi graphs. 

While there are results concerning extreme eigenvalues at the edge of the spectrum \cite{Fri08,Sod09} not much seems to been known about eigenvalues in the bulk of the spectrum.
However, it is conjectured that the local statistics is universal and governed by repulsion of  eigenvalues, see for example \cite{JakMilRivRud99,Elo08,OreSmi10,ErdKnoYauYin12}.

In this article a small  step is made in this direction by studying convergence of the density of states and delocalization of eigenvectors. First we analyze the density of states.
The analogue of Wigner's result for random regular graphs was proved by McKay \cite{Kay81}: The  density of states of the adjacency matrix of a random regular graph converges in distribution to a probability measure, known as the Kesten-McKay law. 

In Theorem \ref{thm:adj} we refine this result by giving uniform bounds on the rate of convergence. This allows us to deduce local convergence of the number of eigenvalues in intervals of size comparable to $\log_{d-1}^{-1}(n)$. Our approach is based on the fact that a random regular graph coincides locally with a regular tree. This explains the rate $\log_{d-1}^{-1}(n)$ since this approximation typically works well up to distances comparable to $\log_{d-1}(n)$.
While this rate is far from the desired $1/ n$, our results are strong enough to deduce statements about convergence of the Green function and delocalization of eigenvectors.

Our findings are similar to results recently obtained by Dumitriu and Pal \cite{DumPal12} and by Tran, Vu, and Wang \cite{TraVuWan13}. They also study spectral properties of random regular graphs and consider the case where the degree $d$ is not fixed but tends to infinity together with the number of vertices $n$. The main results also include delocalization of eigenvectors and local convergence of the number of eigenvalues. Again the size of the allowed intervals is comparable to $\log_{d_n-1}^{-1}(n)$.
Let us remark that our methods -- even though they are tailored for regular graphs with fixed degree --  also extend to graphs with growing degree. In Theorem~\ref{thm:grow} we recover results from \cite{DumPal12} and \cite{TraVuWan13} and we slightly improve them in a certain sense that is made precise below.

After this prelude, the main purpose of this article is to derive similar results for random Schr\"odinger operators and to deduce delocalization of eigenvectors.
Motivated by recent physical and numerical considerations about eigenvalue statistics \cite{DelSca13,BirReiTar13} we study the Anderson model of random Schr\"odinger operators on random regular graphs: A random Schr\"odinger operator consists of the adjacency matrix perturbed by a random potential,  that means one adds independent identically distributed entries on the diagonal of the matrix. Again we explore the density of states by comparing with the Anderson model on the infinite regular tree. The strength of our approach lies in the fact that it depends only on local properties of the graph, therefore it extends to general local operators such as random Schr\"odinger operators.

The Anderson model on the tree is one of the most studied models of random Schr\"odinger operators, see \cite{War12} for an overview of results and references. It is a natural question in what way spectral properties on the infinite tree extend to the  corresponding finite-volume operator on a random regular graph. However, for Schr\"odinger operators the analysis of spectrum  and eigenvectors is more challenging since the corresponding spectral measure on the infinite tree is not purely absolutely continuous but it also contains a pure-point component.

This should influence the behavior of eigenvectors of the finite-volume operator. In spectral regimes that correspond to pure-point spectrum of the infinite-volume operator one expects to find exponential localization while eigenvectors with eigenvalues within the absolutely continuous spectrum are expected to be delocalized. In turn, properties of eigenvectors are conjectured to determine whether the local eigenvalue statistics is Poisson or governed by level repulsion \cite{DelSca13,BirReiTar13,War12} . 

Again we are able to make a small step in this direction. In Theorem \ref{thm:esd} we show that the mean density of states of a random Schr\"odinger operator on a random regular graph converges to the density of states on the infinite tree. We give bounds on the rate of convergence and show that the mean number of eigenvalues in intervals of size comparable to $\log_{d-1}^{-1}(n)$ converges.
Finally, in Theorem \ref{thm:deloc} we apply these results and combine them with recent results about random Schr\"odinger operators on the infinite tree \cite{AizWar12} to prove that typically eigenvectors on a random regular graph with eigenvalues within the absolutely continuous spectrum of the infinite-volume operator are not localized. 

The article is organized as follows. In the next section we explain the relevant notation concerning random regular graphs and spectral measures. Then we give the precise statements of the main results about convergence of the density of states.

In Section \ref{sec:detest} we provide the main tools: The fact that a random regular graph coincides locally with a tree implies that low moments of the spectral measure on the graph agree with the respective moments on the tree. Therefore also the expectation of polynomials of low degree (typically up to $\log_{d-1}(n)$) is the same.  Thus to obtain uniform estimates on the rate of convergence of the density of states one needs to approximate the Heaviside function by polynomials. As was noticed by Chebyshev, Markov, and Stieltjes orthogonal polynomials are well suited for this purpose.  In Theorem~\ref{thm:detest} we use this approach to prove a deterministic estimate for the density of states valid for all regular graphs.

In Section \ref{sec:cycles} we study the distribution of cycles in a random regular graph and we show that it is justified to approximate a random regular graph locally by a tree. We use that to complete the proof of the results from Section \ref{sec:results}.

In the final section we apply the developed techniques to prove convergence of the Green function and to deduce results about delocalization of eigenvectors. 
 

\section{Notation and results}
\label{sec:results}

Consider the set $\mathcal G_{n,d}$ of simple regular graphs with $n$ vertices and degree $d \geq 3$. Let $\mathcal P_{n,d}$ denote the uniform probability measure on this set and let $\mathcal E_{n,d}$ denote the expectation with respect to $\mathcal P_{n,d}$. We study this ensemble as $n$ tends to infinity. We say that an event happens asymptotically almost surely if the probability $\mathcal P_{n,d}$ of the event tends to one as $n$ tends to infinity.  

We assume that the degree $d$ is fixed unless stated otherwise. However, the methods that are employed here are not limited to this case. In particular in Theorem \ref{thm:grow} we consider the case where $d = d_n$ tends to infinity with $n$.

\subsection{The adjacency matrix}
\label{ssec:adj}

First we study the spectrum of  the adjacency matrix $A_n$ of a random regular graph $G_n \in \mathcal G_{n,d}$. The adjacency matrix is the self-adjoint operator on the Hilbert space $\ell^2(G_n)$  defined by
$$
\lk A_n \phi \rk(x) = \sum_{y \in G_n \, : \, d(x,y) = 1} \phi (y) \, , \qquad \phi \in \ell^2(G_n) \, , \qquad x \in G_n \, .
$$
Here the distance $d(x,y)$ of two vertices $x, y \in G_n$  is the length of the shortest path connecting $x$ and $y$. The adjacency matrix is the discrete Laplace operator on $G_n$ with the diagonal terms removed.

Let $(\lambda_j)_{j=1}^n$ denote the eigenvalues of $A_n$ and let $(\varphi_j)_{j=1}^n$ be the corresponding $\ell^2(G_n)$-normalized eigenvectors. If an eigenvalue has multiplicity higher than one we repeat the value according to its multiplicity and we choose the eigenvectors as an orthonormal basis of the eigenspace. Since $G_n$ has $n$ vertices, $A_n$ is a symmetric  $n$ by $n$ matrix and we obtain $n$ eigenvalues and $n$ eigenvectors. To study the distribution of the eigenvalues we introduce the following spectral measures.

For a vertex $x \in G_n$ we write $\delta_x \in \ell^2(G_n)$ for its characteristic function: $\delta_x(x) = 1$ and $\delta_x(y) = 0$ for $y \neq x$. Also, for a set $I \subset \R$ let $\chi_I$ denote its characteristic function: $\chi_I(t) = 1$ for $t \in I$ and $\chi_I(t) = 0$ for $t \notin I$. We write $|I|$ for the length of the set.
The local spectral measure $\mu_{n,x}$ with respect to a vertex $x \in G_n$ is given by
\begin{equation}
\label{adjmeasure}
\mu_{n,x}(I) = \lk \delta_x, \chi_I \lk A_n \rk \delta_x \rk_{\ell^2(G_n)} = \sum_{\lambda_j \in I} |\varphi_j(x)|^2 \, .
\end{equation}
With $\mathcal N_I(G_n)$ we denote the counting measure that counts the number of eigenvalues $\lambda_j$ in the set $I$,
$$
\mathcal N_I(G_n) = \sum_{j=1}^n \chi_I(\lambda_j) = \tr \left[ \chi_I \lk  A_n \rk  \right]
$$
and we remark the identity
\begin{equation}
\label{adjmeasures}
\mathcal N_I(G_n) = \sum_{x \in G_n} \mu_{n,x}(I) \, .
\end{equation}

Our main tool in the analysis of the spectral distribution of $A_n$ is the fact that a typical regular graph locally resembles a tree in the sense that it contains large regions without cycles. (A cycle is a closed path without repetitions of vertices and edges other than the starting and ending vertex.) So we also consider the infinite regular tree $\mathcal T_d$ of degree $d$. On the tree the adjacency matrix $A_{\mathcal T_d}$ is again defined by
\[
\lk A_{\mathcal T_d} \phi \rk(x) = \sum_{y \in \mathcal T_d \, : \, d(x,y) = 1} \phi (y) \, , \qquad \phi \in \ell^2(\mathcal T_d) \, , \qquad x \in \mathcal T_d \, .
\]
Since the degree $d$ is finite $A_{\mathcal T_d}$ is a bounded and self-adjoint operator with domain $\ell(\mathcal T_d)$. In analogy with the local spectral measure \eqref{adjmeasure} we define the local density of states measure
\begin{equation}
\label{treeadjmeasure}
\sigma_0(I) = \lk \delta_x, \chi_I \lk A_{\mathcal T_d} \rk \delta_x \rk_{l^2(\mathcal T_d)}  
\end{equation}
which is independent of $x \in \mathcal T_d$.
On the tree this can be calculated explicitly and is given by the Kesten-McKay measure:
\begin{equation}
\label{kesten}
\sigma_0(d\lambda) =\frac{d}{2\pi}  \frac{\sqrt{4(d-1)-\lambda^2}}{d^2-\lambda^2}  \chi_{(-2\sqrt{d-1}, 2\sqrt{d-1})} (\lambda) d\lambda \, .
\end{equation}

The fact that a random regular graph is locally identical to a tree has already been used by McKay to determine the limiting distribution of eigenvalues of the adjacency matrix. Assume that for each fixed $k \geq 3$ the number of cycles of length $k$ in $G_n$ is of order $o(n)$ as $n$ tends to infinity. Then it is shown in \cite{Kay81} that the measure $\mathcal N_{(\cdot)}(G_n) /n$ converges in distribution to $\sigma_0$. This local convergence was generalized in \cite{Sal11} to the much richer class of self-adjoint graphs that includes trees and, in particular, regular trees. Since we aim at comparing with results on regular trees we restrict ourselves here to local convergence of random regular graphs to regular trees.

We refine the result of McKay by giving an estimate on the rate of convergence.
Note that the measure $\sigma_0$ is supported in the interval $[-2\sqrt {d-1}, 2 \sqrt{d-1}]$ and that its density is bounded by $\gamma_d$, where
$$
\gamma_d = \frac{d}{4\pi} \frac{1}{\sqrt{d^2-4(d-1)}} \ \textnormal{if} \ d \leq 6 \quad \textnormal{and} \quad \gamma_d =  \frac{\sqrt{d-1}}{d \pi} \ \textnormal{if} \ d \geq 7 \, .
$$

\begin{theorem}
\label{thm:adj}
The local density of states of the adjacency matrix of a random regular graph satisfies the estimate
$$
\sup_{t \in \R} \left[ \frac 1n \sum_{x \in G_n} \left| \mu_{n,x}((-\infty,t]) - \sigma_0 ((-\infty,t])  \right| \right] \leq  C \gamma_d    \sqrt {d-1} \log^{-1}_{d-1}(n) 
$$
asymptotically almost surely for any constant $C > 8\pi$.

In particular, the estimate
\begin{equation}
\label{eq:fixedn}
\left| \frac 1n \mathcal N_I(G_n) - \sigma_0(I) \right| \leq \delta  |I|
\end{equation}
holds asymptotically almost surely, for all $\delta > 0$ and all intervals $I \subset \R$ satisfying 
$$
|I| \geq   \frac { 2C \gamma_d  \sqrt {d-1}}{\delta} \log_{d-1}^{-1}(n) \, .
$$

\end{theorem}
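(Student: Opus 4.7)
The plan is to combine the deterministic moment-matching estimate of Theorem~\ref{thm:detest} with the probabilistic control on the cycle structure of $G_n$ from Section~\ref{sec:cycles}. For any $x \in G_n$, the $k$-th moment of $\mu_{n,x}$ equals $(\delta_x, A_n^k \delta_x)_{\ell^2(G_n)}$, which counts closed walks of length $k$ from $x$ to itself in $G_n$, while the $k$-th moment of $\sigma_0$ counts the analogous closed walks on $\mathcal T_d$. If the ball $B_r(x) \subset G_n$ of radius $r = \lfloor k/2 \rfloor$ contains no cycle, then as a rooted graph it is isomorphic to the corresponding ball in $\mathcal T_d$, and the $k$-th moments agree.

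Fix an integer $K$ to be chosen later and call $x \in G_n$ \emph{good} if $B_{\lfloor K/2 \rfloor}(x)$ contains no cycle, otherwise \emph{bad}. For a good vertex the moments of $\mu_{n,x}$ and $\sigma_0$ agree up to order $K$; since the density of $\sigma_0$ on its support $[-2\sqrt{d-1},2\sqrt{d-1}]$ is bounded by $\gamma_d$, Theorem~\ref{thm:detest} supplies the uniform bound
$$
\sup_{t \in \R} \bigl| \mu_{n,x}((-\infty,t]) - \sigma_0((-\infty,t]) \bigr| \; \leq \; \frac{C \gamma_d \sqrt{d-1}}{K}
$$
for any $C > 8\pi$, provided $K$ is large. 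For a bad vertex I use the trivial bound $\bigl| \mu_{n,x}((-\infty,t]) - \sigma_0((-\infty,t]) \bigr| \leq 1$. The expected number of bad vertices is controlled by the cycle analysis of Section~\ref{sec:cycles}: the expected number of cycles of length at most $K$ in $G_n$ is $O((d-1)^K/K)$, and each fixed cycle lies inside $B_{\lfloor K/2\rfloor}(x)$ for only $O((d-1)^{K/2})$ vertices $x$. Hence the expected number of bad vertices is $O((d-1)^{3K/2}/K)$, which is $o(n)$ for $K$ slightly below $(2/3)\log_{d-1}(n)$, and in fact for $K$ as close to $\log_{d-1}(n)$ as one wishes using the finer observation that a cycle of length $k$ fits inside $B_{\lfloor K/2\rfloor}(x)$ only when $x$ lies within distance $\lfloor K/2\rfloor - \lceil k/2\rceil$ of the cycle. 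In either case Markov's inequality ensures $|\mathrm{bad}|/n \to 0$ asymptotically almost surely.

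Splitting the sum over good and bad $x$ and using the two bounds yields
$$
\frac{1}{n} \sum_{x \in G_n} \bigl| \mu_{n,x}((-\infty,t]) - \sigma_0((-\infty,t]) \bigr| \; \leq \; \frac{C \gamma_d \sqrt{d-1}}{K} + \frac{|\mathrm{bad}|}{n}
$$
uniformly in $t \in \R$, and taking $K$ as close to $\log_{d-1}(n)$ as the calibration allows gives the first conclusion with any $C>8\pi$. The estimate~\eqref{eq:fixedn} then follows by applying the first bound at the two endpoints of $I=(a,b]$, using the identity~\eqref{adjmeasures}, and comparing $\sigma_0(I) \leq \gamma_d |I|$ with the hypothesis $|I| \geq 2C\gamma_d\sqrt{d-1}\log_{d-1}^{-1}(n)/\delta$.

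The main obstacle is the joint calibration of $K$: it must be pushed close enough to $\log_{d-1}(n)$ for the deterministic rate $1/K$ to meet $\log_{d-1}^{-1}(n)$ with multiplicative constant approaching $8\pi$, yet kept small enough that the exponentially growing expected cycle count does not spoil more than $o(n)$ vertices. Securing the sharp constant $C > 8\pi$ is expected to rely on the fine Chebyshev--Markov--Stieltjes polynomial approximation of the Heaviside function underlying Theorem~\ref{thm:detest}, rather than on the cycle estimates themselves.
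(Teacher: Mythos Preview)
Your proposal is correct and follows essentially the same route as the paper: the paper also splits vertices according to whether a ball of radius $k$ around $x$ is acyclic (your good set is precisely the paper's $F_n(k)$ with $k=\lfloor K/2\rfloor$), applies the deterministic bound of Theorem~\ref{thm:detest} on the good set, bounds the bad set by the cycle-counting Lemma~\ref{lem:goodset} plus Markov, and then chooses $k=\kappa\log_{d-1}(n)$ with $\kappa<1/2$ (i.e.\ your $K$ close to $\log_{d-1}(n)$) to hit the constant $C>8\pi$. The only cosmetic differences are your parametrization by the moment order $K$ rather than the radius $k$, and the fact that in deriving~\eqref{eq:fixedn} you do not actually need the inequality $\sigma_0(I)\le\gamma_d|I|$---applying the first bound at the two endpoints of $I$ via~\eqref{adjmeasures} already gives $|n^{-1}\mathcal N_I(G_n)-\sigma_0(I)|\le 2C\gamma_d\sqrt{d-1}\log_{d-1}^{-1}(n)$, which is $\le\delta|I|$ under the stated hypothesis on $|I|$.
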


\begin{rem}
As mentioned in the introduction this result can be seen as a generalization of recent results found in \cite{DumPal12,TraVuWan13}, where the spectral distribution of the adjacency matrix is analyzed on random regular graphs with degree $d = d_n$ that tends to infinity as $n \to \infty$. In fact from \cite[Lemma 10]{DumPal12} one can also derive a result for fixed degree $d$, namely that \eqref{eq:fixedn} holds if the size of the interval $I$ is larger than $\log (\log (n)) \log^{-1}(n)$.
\end{rem}

To emphasize that our approach can be applied in various situations let us  now briefly consider the case where the degree $d = d_n$ depends on the number of vertices and tends to infinity as $n \to \infty$, as studied in \cite{DumPal12,TraVuWan13}. To confine the spectrum to a finite interval one considers the rescaled operator 
$$
\tilde A_n = \frac 1{2\sqrt{d_n-1}} A_n 
$$
on $\ell^2(G_n)$ and the correspondingly rescaled operator $\tilde A_{\mathcal T_{d_n}}$ on $\ell^2(\mathcal T_{d_n})$.
Then the local density of states measure $\tilde \sigma_0(I) = ( \delta_x, \chi_I ( \tilde A_{\mathcal T_{d_n}} ) \delta_x )_{\ell^2(\mathcal T_{d_n})}$ is given by
\begin{equation}
\label{eq:rescaledmes}
\tilde \sigma_0(d\lambda) = \frac 2 \pi    \frac{d_n (d_n-1)}{d_n^2-4(d_n-1)\lambda^2}  \sqrt{1 - \lambda^2} \chi_{(-1,1)} (\lambda) d\lambda \, .
\end{equation}
As $n$ tends to infinity this measure converges to the semicircle measure
\begin{equation}
\label{sc}
\sigma_{\s}(d\lambda) = \frac 2 \pi \sqrt{1 - \lambda^2} \chi_{(-1,1)}(\lambda) d\lambda \, .
\end{equation}
In the same way as above we define the local spectral measure $\tilde \mu_{n,x}(I) = ( \delta_x, \chi_I ( \tilde A_n ) \delta_x )_{\ell^2(G_n)}$
and the counting measure
\begin{equation}
\label{growmeasures}
\tilde{ \mathcal  N}_I(G_n) = \tr \left[ \chi_I (  \tilde A_n ) \right] = \sum_{x \in G_n} \tilde \mu_{n,x}(I)
\end{equation}
and we obtain the following local semicircle law. 

\begin{theorem}
\label{thm:grow}
Let $d_n \to \infty$ as $n \to \infty$ with $d_n \leq (n/\ln(n))^{1/3}$. Then the local density of states of the operator $\tilde A_n$ satisfies the estimate
$$
\sup_{t \in \R} \left[ \frac 1n \sum_{x \in G_n} \left| \tilde \mu_{n,x}((-\infty,t]) - \sigma_\s ( (-\infty,t] )  \right| \right] \leq C  \lk  \frac{\ln (d_n-1)}{\ln(n)} + \frac 1{ d_n} \rk  
$$
asymptotically almost surely for any constant $C > 8$.

In particular, the estimate
\begin{equation}
\label{grownumber}
\left| \frac 1n \tilde{ \mathcal N}_I(G_n) - \sigma_\s(I) \right| \leq \delta  |I|
\end{equation}
holds asymptotically almost surely, for all $\delta > 0$ and all intervals $I \subset \R$ satisfying 
$$
|I| \geq   \frac{2 C}{\delta}  \lk  \frac{\ln (d_n-1)}{\ln(n)} + \frac 1{d_n} \rk \, . 
$$

\end{theorem}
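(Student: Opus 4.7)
The plan is to follow the same three-step template that will yield Theorem~\ref{thm:adj}, but with the semicircle law $\sigma_\s$ in place of $\sigma_0$ and with an additional error term that accounts for the difference between the Kesten-McKay target on the tree $\mathcal T_{d_n}$ and its $d_n\to\infty$ limit. By the triangle inequality, for every $t\in\R$ and every vertex $x$,
\begin{equation*}
\bigl|\tilde\mu_{n,x}((-\infty,t]) - \sigma_\s((-\infty,t])\bigr|
\leq \bigl|\tilde\mu_{n,x}((-\infty,t]) - \tilde\sigma_0((-\infty,t])\bigr| + \bigl|\tilde\sigma_0((-\infty,t]) - \sigma_\s((-\infty,t])\bigr|,
\end{equation*}
so it suffices to control each term uniformly in $t$.

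The second term is purely deterministic and is handled by a direct computation from the explicit densities \eqref{eq:rescaledmes} and \eqref{sc}. On the common support $(-1,1)$, the ratio $\tilde\sigma_0'(\lambda)/\sigma_\s'(\lambda) = d_n(d_n-1)/(d_n^2-4(d_n-1)\lambda^2)$ equals $1+O(1/d_n)$ uniformly in $\lambda$, since the denominator lies between $(d_n-2)^2$ and $d_n^2$. Integrating against the semicircle density yields the Kolmogorov bound $\sup_t |\tilde\sigma_0((-\infty,t]) - \sigma_\s((-\infty,t])| = O(1/d_n)$, which produces the $1/d_n$ contribution in the statement.

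For the first term, I would apply Theorem~\ref{thm:detest} to the rescaled operator $\tilde A_n$. That deterministic estimate expresses the error at scale $k$ as the sum of a polynomial-approximation error for the Heaviside function on the spectrum of $\tilde A_{\mathcal T_{d_n}}$ (which via Chebyshev polynomials of degree $k$ is of order $1/k$ after averaging against $\sigma_\s$) plus a term controlled by the number of closed walks of length $\leq k$ on $G_n$ that are \emph{not} walks on the tree. By the telescoping identity for traces of polynomials, the latter quantity is bounded by a constant times the number of vertices lying on a cycle of length $\leq k$. Choosing $k = \lfloor c\,\ln n/\ln(d_n-1)\rfloor$ for a small constant $c$ makes the polynomial error $O(\ln(d_n-1)/\ln n)$, matching the first term in the stated bound.

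The main obstacle is showing that the cycle contribution in step three is negligible for the full range $d_n \leq (n/\ln n)^{1/3}$. Here I would invoke the cycle-counting estimates of Section~\ref{sec:cycles}: via the configuration model the expected number of cycles of length $\ell$ in $G_n$ is of order $(d_n-1)^\ell/(2\ell)$, so summing over $3 \leq \ell \leq k$ the expectation is at most $C(d_n-1)^k/k$. With $k \sim c\,\ln n/\ln(d_n-1)$ this is $O(n^c/\ln n)$, which for small enough $c$ is $o(n)$ and even allows a Markov-type concentration argument to make the cycle count $o(n)$ asymptotically almost surely. The hypothesis $d_n \leq (n/\ln n)^{1/3}$ enters precisely to keep even the shortest cycles ($\ell=3$) rare: one needs $(d_n-1)^3 = o(n/\ln n)$ in order that the expected number of triangles times a factor of $k$ remains small enough to be absorbed into $o(n)$ after the union bound used to convert in-expectation bounds into an almost-sure statement. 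Combining the two error terms and averaging over $x\in G_n$ then yields the asserted bound, and the second part of the theorem follows from the first by writing $\mathcal{\tilde N}_I(G_n)/n - \sigma_\s(I)$ as a difference of the supremum-controlled distribution functions at the two endpoints of $I$.
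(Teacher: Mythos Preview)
Your proposal is correct and follows essentially the same route as the paper's proof: split via the triangle inequality into $|\tilde\mu_{n,x}-\tilde\sigma_0|+|\tilde\sigma_0-\sigma_\s|$, bound the second term by $O(1/d_n)$ directly from the explicit densities (this is the paper's estimate~\eqref{scestimate}), and control the first via the deterministic estimate~\eqref{basicgrowest} together with Lemma~\ref{lem:goodset}, choosing $k\sim\kappa\log_{d_n-1}(n)$ with $\kappa<1/2$. One minor imprecision worth flagging: the hypothesis $d_n\le(n/\ln n)^{1/3}$ is used in the paper not to make triangles rare but to guarantee the precondition $k\le n/(4d_n)-2d_n^2$ of Lemma~\ref{lem:goodset} (equivalently $d_n^3\lesssim n$, so that the McKay subgraph-probability bounds apply); in the estimate for $\mathcal E_{n,d}[|G_n\setminus F_n(k)|]$ the dominant contribution comes from cycles of length $2k$, not length $3$.
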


\begin{rem}
This theorem is similar to recent results from \cite{DumPal12,TraVuWan13}. In \cite[Theorem 2]{DumPal12} it is shown that \eqref{grownumber} holds with probability at least $1- o(1/n)$, if $d_n = \ln^\gamma (n)$ and the size of the considered interval $I$ is comparable to $\ln^{-\beta \gamma}(n)$ for $0 < \gamma < 1$ and comparable to $\ln^{-\beta}(n)$ for $\gamma \geq 1$ both with $\beta < 1$. Theorem \ref{thm:grow} above improves this in the sense that one can consider slightly smaller intervals $I$.

 In \cite[Theorem 1.6]{TraVuWan13} a similar estimate is shown to hold with probability at least $1-O(\exp(-cn\sqrt {d_n} \ln (d_n)))$ if the size of the considered interval $I$ is comparable to $\ln^{1/5}(d_n)$ $d_n^{-1/10}$. As far as the size of the interval is concerned Theorem \ref{thm:grow} above gives better bounds if $d_n$ is less than $\ln^{10}(n)$. For $d_n$ larger than that the result in \cite{TraVuWan13} is stronger. 
\end{rem}


\subsection{Random Schr\"odinger operators}
\label{ssec:rs}

Now we study the distribution of eigenvalues of a random Schr\"odinger operator on a random regular graph $G_n \in \mathcal G_{n,d}$ with fixed degree $d \geq 3$. We define the operator
$$
H_n(V) = A_n + V
$$
on $\ell^2(G_n)$.  
The operator $V$ denotes a random potential, a multiplication operator,
$$
\lk V \phi \rk (x) = \omega_x \phi(x) \, , \quad \phi \in \ell^2(G_n) \, ,\quad x \in G_n \, ,
$$ 
where $\lk \omega_x \rk_{x \in G_n}$ stands for a collection of independent identically distributed real random variables with density $\rho$. We assume that $\rho$ has bounded support such that $\supp \rho \subset (-\rho_0, \rho_0)$ for some $0 <\rho_0 < \infty$ and we write $\| \rho \|_\infty = \sup_{t \in \R} \rho(t)$. 

For random Schr\"odinger operators we apply the same notation as above for eigenvalues and eigenvectors. Here the eigenvalues $(\lambda_j)_{j=1}^n$  and eigenvectors $(\varphi_j)_{j=1}^n$ are random objects even for a fixed graph $G_n$ since they depend on the potential $V$. So for $x \in G_n$ we consider the random spectral measure
\begin{equation}
\label{eq:randmes}
\mu_{n,x}(I;V) = \lk \delta_x, \chi_I \lk H_n(V) \rk \delta_x \rk_{\ell^2(G_n)} = \sum_{\lambda_j \in I} |\varphi_j(x)|^2
\end{equation}
that corresponds to \eqref{adjmeasure}. As in \eqref{adjmeasures}, with $\mathcal N_I(G_n;V)$ we denote the random variable that counts the number of eigenvalues $\lambda_j$ in the set $I$,
\begin{equation}
\label{romeasures}
\mathcal N_I(G_n;V) = \sum_{j=1}^n \chi_I(\lambda_j) =  \tr \left[ \chi_I \lk H_n(V) \rk \right] = \sum_{x \in G_n} \mu_{n,x}(I;V) \, .
\end{equation}

In the same way we define the corresponding objects on the tree $\mathcal T_d$: First the operator
\[
H_{\mathcal T_d}(V) = A_{\mathcal T_d} + V
\]
on $\ell^2(\mathcal T_d)$. Here $V = (\omega_x)_{x \in \mathcal T_d}$  denotes again a collection of independent identically distributed real random variables with density $\rho$. We write $\p$ and $\E$ for the probability and expectation with respect to the distribution of $(\omega_x)_{x \in \mathcal T_d}$. We recall that the density of the random potential has finite support. Hence, $H_{\mathcal T_d}$ is a bounded self-adjoint operator with domain $\ell^2(\mathcal T_d)$. 

For $x \in \mathcal T_d$ we define the local density of states measure
$$
\mu_{\mathcal T_d,x}(I;V) = \lk \delta_x, \chi_I \lk H_{\mathcal T_d}(V) \rk \delta_x \rk_{\ell^2(\mathcal T_d)} \, .
$$
This measure depends on the potential $V$ and thus on the vertex $x \in \mathcal T_d$. To obtain an invariant measure we take the expectation with respect to the random potential and set
\begin{equation}
\label{treedos}
\sigma_\rho(I) = \E  \left[ \mu_{\mathcal T_d,x}(I;V) \right] \, .
\end{equation}
By translation invariance of the operator $H_{\mathcal T_d}(V)$ the measure $\sigma_\rho$ is independent of $x \in \mathcal T_d$ and thus depends only on the density $\rho$. We refer to Appendix \ref{ap:wegner}, where we state selected properties of $\mu_{\mathcal T_d,x}$ and  $\sigma_\rho$.

To identify the random potential on the graph $G_n$ with the random potential on the  tree $\mathcal T_d$ we consider the tree as the universal cover of the graph. To construct the universal cover one starts with an arbitrary vertex $o \in G_n$ and the set of non-backtracking walks in $G_n$ that start at $o$. This is the set of finite sequences $(x_j)_{j=1}^k$ such that $x_1 = o$, $x_j$ is adjacent to $x_{j+1}$ in $G_n$ for $j = 1, \dots, k-1$, and $x_{j-1} \neq x_{j+1}$ for $j = 2, \dots, k-1$. Two such walks are said to be adjacent if their lengths differ by one and if they agree except for the last vertex of the longer walk. The set of non-backtracking walks in $G_n$ starting at $o$ with this notion of adjacency  is isomorphic to the tree $\mathcal T_d$ and forms the universal cover of the graph $G_n$. 

The graph $G_n$ induces an equivalence relation on its universal cover and thus on the tree: Two vertices of $\mathcal T_d$ are equivalent if the corresponding non-backtracking walks in $G_n$ have the same endpoints. Hence, the graph $G_n$ can be recovered from the universal cover as the set of equivalence classes. This induces a map
\[
\iota : \mathcal T_d \to G_n
\]
which is onto and preserves the  local geometry in the sense explained in the following.

For a vertex $x \in G_n$ and $k \in \N$ let $B_k(x)  = \{y \in G_n \, : \, d(x,y) \leq k \}$ denote the $k$-neighborhood of $x$, including all edges of $G_n$ that are incident with at least one vertex $y$ with $d(x,y) < k$. If $B_k(x)$ is acyclic then $B_k(x)$ is a finite tree of depth $k$. To compare the graph $G_n$ locally to the tree  we define
\begin{equation}
\label{eq:maxr}
R(x) = \max \{k \in \N \, : \, B_k(x) \ \textnormal{is acyclic} \} \, .
\end{equation}
(Since $G_n$ does not contain double edges we have $R(x) \geq 1$ for all $x \in G_n$.)
Let $\hat x \in \mathcal T_d$ be an arbitrary vertex from the preimage of $x$ under the map $\iota$ and let 
\begin{equation}
\label{eq:iotamap}
\iota_{\hat x} \, : \, B_{R(x)}(\hat x) \subset \mathcal T_d \, \to \, B_{R(x)}(x) \subset G_n
\end{equation}
be the restriction of $\iota$ to the neighborhood $B_{R(x)}(\hat x)=\{ \hat y \in \mathcal T_d \, : \, d(\hat x, \hat y) \leq R(x) \}$. By \eqref{eq:maxr} the map $\iota_{\hat x}$ is an isomorphism from $B_{R(x)}(\hat x)$ to $B_{R(x)}(x)$.

Given a realization of the random potential $V = (\omega_{\hat x})_{\hat x \in \mathcal T_d}$ on the tree, this map generates a realization of the potential on  the graph: For $x \in G_n$ we choose  $\hat x \in \mathcal T_d$ as above and for $y \in B_{R(x)}(x)$ we set $\omega_y = \omega_{\hat y}$, where $\hat y \in B_{R(x)}(\hat x)$ is the unique preimage of $y$ under $\iota_{\hat x}$. For $y \notin B_{R(x)}(x)$ we set $\omega_y = \omega_{\hat y}$, where $\hat y \in \mathcal T_d$ is an arbitrary vertex from the preimage of $y$ under $\iota$.  This procedure yields  a collection of independent and identically distributed random variables $(\omega_y)_{y \in G_n}$ with density $\rho$. In fact, this realization of the random potential depends on $x \in G_n$ and on the choice of preimages. However, by independence of the random variables $(\omega_y)_{y \in \mathcal T_d}$ this dependence disappears after taking expectations. Thus we denote the resulting random potential on  $G_n$ again by $V$.

With this construction the local geometry and the random potential in $B_{R(x)}(x) \subset G_n$ and $B_{R(x)}(\hat x) \subset \mathcal T_d$ coincide. By induction in $k \in \N$, it is easy to see that $(\delta_x, H_n(V)^k \delta_x)_{\ell^2(G_n)}$ and $(\delta_{\hat x}, H_{\mathcal T_d}(V)^k \delta_{\hat x})_{\ell^2(\mathcal T_d)}$ only depend on the local geometry and on the random potential in $B_m(x)$ and $B_m(\hat x)$ respectively, where $m = k/2$ for $k$ even and $m = (k+1)/2$ for $k$ odd.  In particular, it follows that 
\begin{equation}
\label{localid}
(\delta_x, H_n(V)^k \delta_x)_{\ell^2(G_n)} = (\delta_{\hat x}, H_{\mathcal T_d}(V)^k \delta_{\hat x})_{\ell^2(\mathcal T_d)}
\end{equation}
for all $k = 0, 1, \dots, 2R(x)$. This identity is a key ingredient in the proof of the following result.

\begin{theorem}
\label{thm:esd}
Assume that the density $\rho$ of the random potential $V$ satisfies $\| \rho \|_\infty < \infty$ and $\supp  (\rho) \subset (-\rho_0,\rho_0)$ with $0 < \rho_0 < \infty$. 
Then the local density of states of the operator $H_n(V)$ on a random regular graph satisfies the estimate
$$
\sup_{t \in \R} \left[ \frac 1n \sum_{x \in G_n} \E \left| \mu_{n,x}((-\infty,t];V) - \mu_{\mathcal T_d,\hat x} ((-\infty,t];V)  \right| \right] \leq C \| \rho \|_\infty \lk 2 \sqrt {d-1} + \rho_0 \rk \log_{d-1}^{-1}(n) 
$$
asymptotically almost surely for any constant $C > 4 \pi$.

In particular, the estimate
$$
\left| \frac 1n \E \left[\mathcal N_I(G_n;V) \right] - \sigma_\rho(I) \right| \leq \delta  |I|
$$
holds asymptotically almost surely, for all $\delta > 0$ and all intervals $I \subset \R$ satisfying 
$$
|I| \geq   \frac {2C  \| \rho \|_\infty \lk 2 \sqrt {d-1} + \rho_0 \rk}{\delta} \log_{d-1}^{-1}(n) \, .
$$
\end{theorem}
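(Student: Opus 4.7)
The plan is to combine the moment-matching identity \eqref{localid} with $L^1$-polynomial approximation of the Heaviside function, and then to pass to expectations using the Wegner-type bound recorded in Appendix \ref{ap:wegner}. Since $\|A_{\mathcal T_d}\|=2\sqrt{d-1}$ and $\|V\|\le\rho_0$, both $H_n(V)$ and $H_{\mathcal T_d}(V)$ have spectrum inside $[-M,M]$ with $M:=2\sqrt{d-1}+\rho_0$, so all spectral measures in sight are supported on this bounded interval and the problem reduces to approximating step functions on $[-M,M]$ by polynomials.

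For a fixed vertex $x\in G_n$ and a fixed realization of $V$, the identity \eqref{localid} implies that $\mu_{n,x}(\cdot;V)$ and $\mu_{\mathcal T_d,\hat x}(\cdot;V)$ assign the same integral to every polynomial of degree at most $2R(x)$. Consequently, for any such polynomial $p$ and any $t\in\R$,
$$
\bigl|\mu_{n,x}((-\infty,t];V)-\mu_{\mathcal T_d,\hat x}((-\infty,t];V)\bigr|
\le \int_{-M}^{M}|\chi_{(-\infty,t]}(\lambda)-p(\lambda)|\,d\mu_{n,x}(\lambda;V)+\int_{-M}^{M}|\chi_{(-\infty,t]}(\lambda)-p(\lambda)|\,d\mu_{\mathcal T_d,\hat x}(\lambda;V).
$$
Taking expectation over $V$ and invoking the Wegner estimate from Appendix \ref{ap:wegner}, which says that the densities of $\E\mu_{n,x}(\cdot;V)$ and of $\sigma_\rho$ are each bounded by $\|\rho\|_\infty$, I obtain
$$
\E\bigl|\mu_{n,x}((-\infty,t];V)-\mu_{\mathcal T_d,\hat x}((-\infty,t];V)\bigr|\le 2\|\rho\|_\infty\int_{-M}^{M}|\chi_{(-\infty,t]}(\lambda)-p(\lambda)|\,d\lambda.
$$
I then minimize the right-hand side over polynomials $p$ of degree at most $2R(x)$. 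By the Chebyshev-Markov-Stieltjes machinery already employed in Theorem \ref{thm:detest}, the best $L^1$ approximation of a Heaviside step on $[-M,M]$ by a polynomial of degree $N$ is of order $M/N$, uniformly in the location $t$ of the step. This yields the per-vertex bound $\sup_{t\in\R}\E|\cdots|\le C\|\rho\|_\infty\, M/R(x)$.

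Averaging over $x\in G_n$ and invoking the cycle estimates of Section \ref{sec:cycles} completes the argument: asymptotically almost surely, all but a negligible fraction of vertices have $R(x)\ge c\log_{d-1}(n)$, so that $n^{-1}\sum_{x\in G_n}R(x)^{-1}\le C'\log_{d-1}^{-1}(n)$. Combining, the supremum estimate of the theorem follows with the displayed constant $C>4\pi$, the numerical factor being tracked through the sharp $L^1$-approximation constant. The statement about $\E\mathcal N_I(G_n;V)$ is then a routine consequence via \eqref{romeasures}, the identity $\sigma_\rho(I)=\E\mu_{\mathcal T_d,\hat x}(I;V)$, and evaluating the Kolmogorov bound at the two endpoints of $I$. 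The main obstacle in this scheme is that expectation and absolute value do not commute: pathwise, the measures $\mu_{n,x}(\cdot;V)$ carry no usable density bound at fixed $V$, which forces the polynomial approximation to be interposed \emph{before} taking expectation, so that the Wegner estimate can be applied to the averaged measures rather than to each realization separately.
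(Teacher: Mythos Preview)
Your approach is sound and would give a valid proof, but it differs from the paper's in a meaningful way. The paper does not symmetrize: instead of bounding by $\int|\chi_{(-\infty,t]}-p|$ against \emph{both} spectral measures, it applies the Chebyshev--Markov--Stieltjes inequality \emph{pathwise}, with the orthonormal polynomials taken relative to the random tree measure $\mu_{\mathcal T_d,\hat x}(\cdot;V)$ itself. This bounds the difference of the two distribution functions by a single Christoffel number $\bigl(\sum_{n=0}^{R(x)}P_n(t)^2\bigr)^{-1}$, which is then dominated by $\int S^{(t)}_{2R(x)-2}\,d\mu_{\mathcal T_d,\hat x}$ for a fixed \emph{nonnegative} Fej\'er-type polynomial $S^{(t)}_{2R(x)-2}$ (Lemma~\ref{lem:christoffel}). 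Only after this step is the expectation taken, converting $\mu_{\mathcal T_d,\hat x}$ into $\sigma_\rho$ and invoking Wegner once. The advantages of the paper's route are twofold. First, it needs the Wegner bound only for $\sigma_\rho$; your argument also needs it for the finite-volume averaged measure $\E[\mu_{n,x}(\cdot;V)]$, which is true by the same rank-one argument but is not recorded in Appendix~\ref{ap:wegner}. Second, it produces the per-vertex constant $2\pi$ in Theorem~\ref{thm:detest}, and after the cycle-count step with $\kappa<1/2$ this yields $C>4\pi$. Your symmetric bound picks up an extra factor of~$2$ from the two measures, so carrying the paper's own $L^1$-approximation tool (Lemma~\ref{lem:christoffel} applied to Lebesgue measure on $[-M,M]$) through your scheme gives $C>8\pi$; the assertion that the sharp $L^1$ constant recovers $4\pi$ is not substantiated. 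Apart from the constant and the additional Wegner input, your argument is correct and arguably more transparent, since it reduces directly to a classical $L^1$ polynomial-approximation problem on a fixed interval rather than working with orthogonal polynomials of a random measure.
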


In Section \ref{sec:eigenfunctions} we apply the developed methods to deduce estimates for the Green function. We establish convergence of the imaginary part of the Green function on a random regular graph to the respective quantity on the infinite regular tree, see Corollary~\ref{cor:green}. Based on these bounds we prove statements about delocalization of eigenvectors. In particular, in  Theorem~\ref{thm:deloc} we show that eigenvectors of the operator $H_n(V)$ with eigenvalues within the absolutely continuous spectrum of the infinite-volume operator $H_{\mathcal T_d}(V)$ are not localized. Since these results require more notation and since they ask for some discussion we state these results in Section~\ref{sec:eigenfunctions}.


\section{A deterministic estimate for the integrated density of states} 
\label{sec:detest}

In this section we fix a graph $G_n \in \mathcal{G}_{n,d}$ and a vertex $x \in G_n$. We derive an  estimate for the difference between the spectral measure on $G_n$ and the respective measure on the tree $\mathcal T_d$ that depends on the local geometry of the graph.

Recall the definition of $R(x)$ from \eqref{eq:maxr} and set $R(x)^* = R(x)$ for $R(x)$ odd and $R(x)^* = R(x)-1$ for $R(x)$ even.
As in Theorem~\ref{thm:esd}  we write $\hat x \in \mathcal T_d$ for a  vertex from the preimage of $x \in G_n$ under the universal cover.
  
\begin{theorem}
\label{thm:detest}
For all $G_n \in \mathcal G_{n,d}$ and all $x \in G_n$ the local density of states of $A_n$ satisfies 
$$
\sup_{t\in \R} \left[ \left| \mu_{n,x}\lk (-\infty, t] \rk  - \sigma_0 \lk (-\infty,t] \rk \right| \right] \leq  4\pi \gamma_d \sqrt {d-1} \frac{1}{R(x)^*} \, .
$$
Moreover, under the conditions of Theorem~\ref{thm:esd}, the local density of states of $H_n(V)$ satisfies 
$$
\sup_{t\in \R} \left[ \E \left|  \mu_{n,x}\lk (-\infty,t];V \rk  - \mu_{\mathcal T_d, \hat x}\lk (-\infty,t];V \rk  \right| \right] \leq 2\pi  \| \rho \|_\infty \lk 2 \sqrt {d-1} + \rho_0 \rk   \frac{1}{R(x)^*} 
$$
for all $G_n \in \mathcal G_{n,d}$ and all $x \in G_n$.
\end{theorem}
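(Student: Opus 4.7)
The starting point is the moment identity \eqref{localid}: in the adjacency case this specializes to $(\delta_x, A_n^k \delta_x)_{\ell^2(G_n)} = (\delta_{\hat x}, A_{\mathcal T_d}^k \delta_{\hat x})_{\ell^2(\mathcal T_d)}$ for $0 \leq k \leq 2R(x)$, and in the Schr\"odinger case the analogous identity holds pointwise in the random potential $V$. By the spectral theorem, this is equivalent to saying that $\mu_{n,x}$ and $\sigma_0$ (respectively $\mu_{n,x}(\cdot;V)$ and $\mu_{\mathcal T_d, \hat x}(\cdot;V)$ for each realization of $V$) share their first $2R(x)+1$ moments; consequently $\int p \, d\mu_{n,x} = \int p \, d\sigma_0$ for every polynomial $p$ of degree at most $2R(x)$ (and likewise pointwise in $V$ on the Schr\"odinger side).

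The plan is a classical Chebyshev--Markov--Stieltjes sandwich. Fix $t \in \R$ and look for polynomials $p^-_t, p^+_t$ of degree at most $2R(x)^*$ with
\[
p^-_t(\lambda) \leq \chi_{(-\infty,t]}(\lambda) \leq p^+_t(\lambda)
\]
on an interval $[-a,a]$ containing the spectra of both operators ($a = d$ in the adjacency case, $a = d+\rho_0$ in the Schr\"odinger case). Writing $\nu$ for the tree measure ($\sigma_0$ or $\mu_{\mathcal T_d, \hat x}(\cdot;V)$), the sandwich combined with the moment identity yields
\[
\mu_{n,x}\bigl((-\infty,t]\bigr) \leq \int p^+_t \, d\mu_{n,x} = \int p^+_t \, d\nu \leq \nu\bigl((-\infty,t]\bigr) + \int (p^+_t - p^-_t) \, d\nu,
\]
together with the symmetric lower bound from $p^-_t$, so
\[
\bigl| \mu_{n,x}\bigl((-\infty,t]\bigr) - \nu\bigl((-\infty,t]\bigr) \bigr| \leq \int (p^+_t - p^-_t) \, d\nu.
\]
In the Schr\"odinger case I take the expectation and use $\E\, \mu_{\mathcal T_d, \hat x}(\cdot;V) = \sigma_\rho$ (together with Fubini) to pass to $\int (p^+_t - p^-_t)\, d\sigma_\rho$. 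At this point the density bound on the reference measure -- the explicit bound $d\sigma_0/d\lambda \leq \gamma_d$ on $[-2\sqrt{d-1},2\sqrt{d-1}]$ in the adjacency case, and the Wegner-type bound $d\sigma_\rho/d\lambda \leq \|\rho\|_\infty$ on $[-2\sqrt{d-1}-\rho_0, 2\sqrt{d-1}+\rho_0]$ from Appendix~\ref{ap:wegner} in the Schr\"odinger case -- reduces everything to a plain $L^1$-integral of $p^+_t - p^-_t$ over the (narrower) support $J$ of the reference measure.

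The \emph{main obstacle} is therefore the construction of the sandwiching polynomials with a sharp $L^1$-bound: for $N = 2R(x)^*$ one needs polynomials of degree $N$ preserving the one-sided inequality on the large interval $[-a,a]$ while satisfying
\[
\int_J (p^+_t - p^-_t)(\lambda) \, d\lambda \leq \frac{2\pi |J|}{N}
\]
uniformly in the jump location $t \in \R$. This is a classical one-sided polynomial approximation problem for the Heaviside step; the natural construction interpolates $\chi_{(-\infty,t]}$ at rescaled zeros of a Chebyshev polynomial of degree $N$ on $J$, exploiting the fact that such polynomials are of order one on $J$ but permitted to grow rapidly on $[-a,a]\setminus J$, which is exactly what keeps $p^\pm_t$ on the correct side of the step outside $J$ without spoiling the $L^1$-control inside. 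Verifying the estimate with the right constant, uniformly in $t$, is the delicate step.

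Once this polynomial estimate is in place, the rest is bookkeeping: plugging $|J| = 4\sqrt{d-1}$ into the density factor $\gamma_d$ produces the adjacency constant $4\pi\gamma_d\sqrt{d-1}/R(x)^*$, while $|J| = 2(2\sqrt{d-1}+\rho_0)$ paired with $\|\rho\|_\infty$ produces the Schr\"odinger constant $2\pi\|\rho\|_\infty(2\sqrt{d-1}+\rho_0)/R(x)^*$, matching both bounds of the theorem.
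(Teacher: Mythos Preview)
Your high-level strategy coincides with the paper's: moment matching via \eqref{localid}, a Chebyshev--Markov--Stieltjes sandwich for $\chi_{(-\infty,t]}$, and then a density bound on the reference measure ($\gamma_d$ for $\sigma_0$, Wegner for $\sigma_\rho$) to convert to a Lebesgue integral. The reduction of the Schr\"odinger case---sandwich pointwise in $V$, then take $\E$ and land on $\sigma_\rho$---is exactly what the paper does.

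The technical execution, however, is organized differently. The paper does \emph{not} fix measure-independent polynomials $p^\pm_t$ with a Lebesgue $L^1$ bound on their gap. Instead it builds the sandwich $Q_{2N-2}\leq\chi\leq R_{2N-2}$ from the orthogonal polynomials $P_n$ of the reference measure (so in the Schr\"odinger case these depend on $V$), uses Gaussian quadrature to identify $\int(R-Q)\,d\nu$ with the Christoffel number $\bigl(\sum_{n\leq N}P_n(t)^2\bigr)^{-1}$, and then bounds the Christoffel number by $\int S^{(t)}_{2N-2}\,d\nu$ for a \emph{universal} Fej\'er-type polynomial $S^{(t)}$ satisfying $\int_J S^{(t)}\,d\lambda\leq 2\pi w_0/N^*$. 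So the sharp $L^1$ control enters through the Fej\'er kernel, not through the sandwich polynomials themselves. Your route---fixed $p^\pm_t$ with $\int_J(p^+_t-p^-_t)\,d\lambda\leq 2\pi|J|/N$---is also viable, and in fact the Chebyshev-node CMS construction you sketch does yield polynomials with $p^-\leq\chi\leq p^+$ on all of $\R$ (so the constraint on the larger interval $[-a,a]$ is free; your remark about growth outside $J$ being what enforces it is not the right mechanism---it is the double-tangency structure that does this). What your sketch does not immediately deliver is the Lebesgue constant: Gaussian quadrature at Chebyshev nodes controls $\int_J(p^+-p^-)(1-\lambda^2)^{-1/2}\,d\lambda$, and passing to $d\lambda$ with the right constant uniformly in $t$ still needs an argument. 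That argument is precisely the content of the paper's Proposition~\ref{pro:gen} and Lemma~\ref{lem:christoffel}; you have correctly located the difficulty but not resolved it.
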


\begin{rem}
For the adjacency matrix $A_n$ there is a variant of this result. In \cite{Sod07} it is shown that there is a constant $C>0$ such that, for all $m \in \N$,
$$
\sup_{t\in \R} \left[ \left| \mu_{n,x}\lk (-\infty, t] \rk  - \sigma_0 \lk (-\infty,t] \rk \right| \right] \leq C \lk \frac 1m + m^6 \lk \sum_{k = 1}^{2m-2} W_k(x,G_n)^2 \rk^{1/2} \rk \, ,
$$
where $W_k(x,G_n)$ is related to the number of closed non-backtracking walks of length $k$ in $G_n$ that start at $x$. For $m < R(x)$ the second summand is zero. More generally, $W_k(x,G_n)$ can be estimated in terms of the number of cycles in $G_n$ and the resulting bounds are similar to Theorem \ref{thm:adj}.
\end{rem}

In the remainder of this section we prove Theorem \ref{thm:detest}. The proof is based on a general estimate for measures on the real line that we give in the next subsection. In Subsection~\ref{ssec:detproof} we show how the theorem can be deduced from Proposition \ref{pro:gen}.


\subsection{A general estimate}

The following result is related to the classical moment problem and the Chebyshev-Markov-Stieltjes inequality; it is based on an approximation of the Heaviside function by orthogonal polynomials. We refer to the books \cite{Akh65,Sze75,KreNud77} for background information regarding this approach.

\begin{proposition}
\label{pro:gen}
Let $\sigma$ be a measure on the real line with bounded density $w$ and with support in the finite interval $(-w_0,w_0)$.
Let $N \in \N$ and assume that $\mu$ is another measure on the real line such that, for all $k= 0, \dots, 2N$,
\begin{equation}
\label{eq:momid}
\int_\R \lambda^k \sigma(d\lambda) = \int_\R \lambda^k \mu(d\lambda) \, .
\end{equation}
Then the estimate 
$$
\sup_{t\in \R} \left| \sigma((-\infty,t]) - \mu((-\infty,t]) \right| \leq \frac {2\pi} {N^*} \| w \|_\infty w_0 
$$
holds with $N^* = N$ for $N$ odd and with $N^* = N-1$ for $N$ even.
\end{proposition}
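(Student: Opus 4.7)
The proof will run along the classical Chebyshev--Markov--Stieltjes lines: estimate $F_\sigma(t) := \sigma((-\infty,t])$ and $F_\mu(t)$ by sandwiching the Heaviside function $H_t := \chi_{(-\infty,t]}$ between polynomials of degree at most $2N$, and then transfer integrals via the moment equality \eqref{eq:momid}. Specifically, suppose that for each fixed $t \in \R$ one can find polynomials $p_-^t, p_+^t$ of degree at most $2N$ such that $p_-^t(\lambda) \leq H_t(\lambda) \leq p_+^t(\lambda)$ for all $\lambda \in \R$. Integrating against $\mu$ and $\sigma$, and using \eqref{eq:momid} to conclude $\int p_\pm^t\,d\mu = \int p_\pm^t\,d\sigma$, one sees that both $F_\sigma(t)$ and $F_\mu(t)$ lie in $[\int p_-^t\,d\sigma,\int p_+^t\,d\sigma]$, so
\[
|F_\sigma(t) - F_\mu(t)| \;\leq\; \int_\R (p_+^t - p_-^t)\,d\sigma \;\leq\; \|w\|_\infty \int_{-w_0}^{w_0} \bigl(p_+^t(\lambda) - p_-^t(\lambda)\bigr)\,d\lambda.
\]
The proposition therefore reduces to producing, uniformly in $t \in \R$, a polynomial sandwich of degree at most $2N$ whose $L^1$-norm on $(-w_0,w_0)$ is at most $2\pi w_0/N^*$.

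For the explicit construction I would rescale via $\lambda = w_0 s$ to move to $[-1,1]$ and then set $s = \cos\theta$. An algebraic polynomial in $s$ of degree $\leq 2N$ becomes a cosine polynomial in $\theta$ of degree $\leq 2N$, and the Heaviside $\chi_{(-\infty,c]}(s)$, with $c = t/w_0 \in [-1,1]$, turns into the step $\chi_{[\theta_c,\pi]}$ with $\theta_c = \arccos(c)$. The task is now one-sided trigonometric approximation of a step function by cosine polynomials of degree $\leq 2N$, for which several classical constructions are available: for example a Fej\'er- or Jackson-kernel smoothing of $H_t$ corrected by a non-negative trigonometric polynomial, or equivalently Hermite-type interpolation (matching value and vanishing derivative) at a Chebyshev node system together with an extra interpolation condition at $\theta_c$. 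Such a construction yields cosine polynomials sandwiching the step with $L^1(d\theta)$-error of order $1/N^*$; after accounting for the Jacobian $ds = -\sin\theta\,d\theta$ and the rescaling $\lambda = w_0 s$, this translates into the desired bound $2\pi w_0/N^*$.

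The delicate point is that the sandwich $p_-^t \leq H_t \leq p_+^t$ must hold on \emph{all} of $\R$ and not only on $\supp \sigma$, since $\mu$ is an arbitrary measure with matching moments. Because $p_+^t$ must dominate $1$ as $\lambda \to -\infty$ and $0$ as $\lambda \to +\infty$, it has to be of even degree with positive leading coefficient, and analogously $p_-^t$ must have even degree with negative leading coefficient; this constraint, together with the interpolation conditions used in the construction, is the origin of the parity adjustment $N^* = N-1$ when $N$ is even. The regime $|t| > w_0$, where $H_t$ is constant on $\supp\sigma$, fits the same framework by taking $p_\pm^t$ of the form $\alpha \pm P$ with $P \geq 0$ a polynomial of degree $\leq 2N$ that is small on $(-w_0,w_0)$ and at least $1$ on the tail past $t$; such $P$ is controlled by the $2N$-th moment bound $\int \lambda^{2N}\,d\mu = \int \lambda^{2N}\,d\sigma \leq w_0^{2N}\sigma(\R)$ through a Markov-type estimate. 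I expect the main technical obstacle to lie in pinning down the sharp constant $2\pi$ in the sandwich uniformly in $t \in \R$ while maintaining the inequalities on the whole real line.
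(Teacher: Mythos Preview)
Your overall strategy---sandwich the Heaviside by polynomials of degree at most $2N$ and transfer via the moment identity---is exactly the Chebyshev--Markov--Stieltjes framework the paper uses, but the paper executes it along a different route that makes the constant $2\pi$ fall out without the difficulty you anticipate.

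First, your treatment of $|t|>w_0$ is more elaborate than necessary. The paper reduces to $t\in(-w_0,w_0)$ in two lines: for $t\le -w_0$ one has $F_\sigma(t)=0$, so $|F_\sigma(t)-F_\mu(t)|=F_\mu(t)\le F_\mu(-w_0)=|F_\sigma(-w_0)-F_\mu(-w_0)|$, and the case $t\ge w_0$ is symmetric using the equality of total masses ($k=0$ in \eqref{eq:momid}). No Markov-type tail argument is needed.

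For $t\in(-w_0,w_0)$ the paper does \emph{not} build the sandwich from Chebyshev nodes or a direct Fej\'er/Selberg construction. Instead it uses Hermite interpolation at the zeros of $P_N$, the $N$th orthonormal polynomial with respect to $\sigma$ itself (if $P_N(t)\neq 0$ one replaces $P_N$ by $P_{N+1}+sP_N$ with $s$ chosen so that $t$ becomes a zero). The resulting one-sided polynomials automatically sandwich $H_t$ on all of $\R$, and Gaussian quadrature at those same nodes collapses the gap $\int(p_+^t-p_-^t)\,d\sigma$ \emph{exactly} to the Christoffel number
\[
\frac{1}{\sum_{n=0}^{N}P_n(t)^2}\,.
\]
Only after this reduction does a Fej\'er-type polynomial enter, in a separate lemma: one exploits the extremal characterization of the Christoffel number as $\min\{\int S\,d\sigma:\deg S\le 2N,\ S\ge 0,\ S(t)=1\}$ and plugs in an explicit nonnegative test polynomial $S^{(t)}_{2N-2}$ built from the Fej\'er kernel, whose $L^1$-norm on $(-w_0,w_0)$ is at most $2\pi w_0/N^*$. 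Since $\int S^{(t)}_{2N-2}\,d\sigma\le\|w\|_\infty\int_{-w_0}^{w_0}S^{(t)}_{2N-2}$, the constant $2\pi$ appears directly.

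The upshot is that the paper decouples the problem: Gaussian quadrature with respect to $\sigma$ handles the one-sided-on-all-of-$\R$ constraint and reduces everything to the Christoffel function, and then a single nonnegative (not one-sided) Fej\'er polynomial bounds that function. Your direct approach would have to produce a one-sided approximant on $\R$ with the sharp $L^1$ bound in one stroke, which---as you correctly flag---is the hard part; the paper's route simply avoids it.
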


\begin{proof}
First we note that for $t \leq - w_0$ we have
$$
\left| \sigma((-\infty,t]) - \mu((-\infty,t]) \right|  \leq \left| \mu((-\infty,-\omega_0]) \right| =  \left| \sigma((-\infty,-w_0]) - \mu((-\infty,-w_0]) \right| 
$$
and for $t \geq w_0$ (by \eqref{eq:momid} with $k=0$),
$$
\left| \sigma((-\infty,t]) - \mu((-\infty,t]) \right|  =  \sigma((-\infty,\omega_0]) - \mu((-\infty,t])  \leq \left| \sigma((-\infty,w_0]) - \mu((-\infty,w_0]) \right|  \, .
$$
Hence, we can assume $t \in (-w_0,w_0)$.

For $n \in \N_0$, let $P_n$ denote the orthonormal polynomial of degree $n$ with respect to the measure $\sigma$. We claim that 
\begin{equation}
\label{cmsbound}
\left| \sigma \lk (-\infty,t] \rk - \mu \lk (-\infty,t] \rk \right| \leq  \frac 1{\sum_{n=0}^N P_n(t)^2 } \, .
\end{equation}
Combining this bound with Lemma \ref{lem:christoffel} below proves the proposition.

To establish \eqref{cmsbound} let us first assume that $t$ is a zero of the polynomial $P_N$. We  remark that these zeros are real and simple \cite{Akh65} and we denote them by $\lambda_1 < \lambda_2 < \dots < \lambda_N$. Then this assumption means that $t = \lambda_j$ for an index $j \in \{1, \dots, N \}$.  
We construct a polynomial $R_{2N-2}$ of degree $2N-2$ that satisfies
\[
R_{2N-2}(\lambda_1) =  \cdots = R_{2N-2}(\lambda_j) = 1 \ , \qquad R_{2N-2}(\lambda_{j+1}) =  \cdots=  R_{2N-2}(\lambda_N) = 0 \, ,
\]
and 
\[
R_{2N-2}'(\lambda_i) = 0 
\]
for all $i \neq j$.
These $2N-1$ assumptions determine the polynomial $R_{2N-2}$ uniquely and we see that $R_{2N-2}(\lambda) \geq \chi_{(-\infty,\lambda_j]}(\lambda)$ for all $\lambda \in \R$. In the same way we construct a polynomial $Q_{2N-2}$ of  degree $2N-2$ by changing only the condition at $\lambda_j$ to $Q_{2N-2}(\lambda_j) = 0$. Then we get $Q_{2N-2} (\lambda) \leq \chi_{(-\infty,\lambda_j]}(\lambda)$ for all $\lambda \in \R$.
Hence, we can estimate
\begin{align}
\nonumber
\sigma\lk (-\infty, \lambda_j] \rk &= \int_\R \chi_{(-\infty,\lambda_j]} \sigma(d\lambda) \leq \int_\R R_{2N-2}(\lambda) \sigma(d\lambda) \, , \\
\label{upperandlower}
\mu \lk (-\infty, \lambda_j] \rk &= \int_\R \chi_{(-\infty,\lambda_j]} \mu(d\lambda) \geq \int_\R Q_{2N-2}(\lambda) \mu(d\lambda)
\end{align}
and
\begin{equation}
\label{beforequad}
\sigma\lk (-\infty, \lambda_j] \rk - \mu \lk (-\infty, \lambda_j] \rk  \leq \int_\R R_{2N-2}(\lambda) \sigma(d\lambda) - \int_\R Q_{2N-2}(\lambda) \mu(d\lambda) \, .
\end{equation}

To bound the right-hand side we invoke the Gaussian quadrature formula from Lemma~\ref{lem:quadrature} in Appendix \ref{ap:quadrature}: With $M = N-1$ and $s = 0$ we find
$$
\int_\R R_{2N-2}(\lambda) \sigma(d\lambda) = \sum_{k=1}^N \frac{R_{2N-2}(\lambda_k)}{\sum_{n=0}^{N-1} P_n(\lambda_k)^2} =  \sum_{k=1}^j \frac 1{\sum_{n=0}^{N-1} P_n(\lambda_k)^2} \, .
$$
By assumption, the first $2N$ moments of the measures $\sigma$ and $\mu$ agree, so we also get 
$$
\int_\R Q_{2N-2}(\lambda) \mu(d\lambda) = \int_\R Q_{2N-2}(\lambda) \sigma(d\lambda) = \sum_{k=1}^N \frac{Q_{2N-2}(\lambda_k)}{\sum_{n=0}^{N-1} P_n(\lambda_k)^2} =  \sum_{k=1}^{j-1} \frac 1{\sum_{n=0}^{N-1} P_n(\lambda_k)^2} \, .
$$
Combining these identities with the estimate \eqref{beforequad} yields the upper bound
$$
\sigma\lk (-\infty, \lambda_j] \rk - \mu \lk (-\infty, \lambda_j] \rk \leq  \frac 1{\sum_{n=0}^{N-1} P_n(\lambda_j)^2} =  \frac 1{\sum_{n=0}^N P_n(\lambda_j)^2} \, ,
$$
where we used the assumption $P_N(\lambda_j) = 0$ in the last step. The lower bound is proved in the same way by exchanging the roles of $\sigma$ and $\mu$ in \eqref{upperandlower} and \eqref{beforequad}. This proves \eqref{cmsbound} if $t$ is a zero of $P_N$.

It remains to prove \eqref{cmsbound} if $t$ is not a zero of $P_N$. For $s \in \R$ we define a polynomial of degree $N+1$ by
$$
\hat P_{N+1}(\lambda) = P_{N+1}(\lambda) + s P_N(\lambda) \, .
$$
Since $P_N(t) \neq 0$ we can choose $s$ in such a way that $\hat P_{N+1}(t) = 0$.
Thus we can argue in the same way as before, with $N$ replaced by $N+1$.

This establishes \eqref{cmsbound} and completes the proof.
\end{proof}

The proof of Proposition \ref{pro:gen} is based on the following estimate of the so-called Christoffel numbers $(\sum_{n=0}^N P_n(t)^2)^{-1}$.

\begin{lemma}
\label{lem:christoffel}
Assume that $\sigma$ is a measure on the real line with bounded density $w$ and with support in the finite interval $(-w_0, w_0)$. Let $P_n$, $n \in \N_0$, denote the orthonormal polynomial of degree $n$ with respect to $\sigma$.

Then for all $N \in \N$ and all $t \in (-w_0,w_0)$ the estimate
$$
\frac 1{\sum_{n=0}^N P_n(t)^2} \leq  \frac{2 \pi w_0 \|w\|_\infty}{N^*}
$$
holds with $N^* = N$ for $N$ odd and with $N^* = N-1$ for $N$ even.
\end{lemma}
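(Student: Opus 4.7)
My plan is to apply the classical variational characterization of the Christoffel number together with an explicit trial polynomial built from the Fej\'er kernel. Specifically, I would invoke the identity
\[
\frac{1}{\sum_{n=0}^N P_n(t)^2} \;=\; \inf\left\{ \int Q(\lambda)^2\, \sigma(d\lambda) \;:\; \deg Q \leq N,\; Q(t) = 1 \right\},
\]
so that it is enough to produce a single admissible polynomial $Q$ realizing the desired upper bound. Using the assumption $d\sigma \leq \|w\|_\infty \chi_{(-w_0,w_0)}(\lambda)\,d\lambda$, the task reduces to finding a polynomial $Q$ of degree $\leq N$ with $Q(t) = 1$ and $\int_{-w_0}^{w_0} Q^2 \, d\lambda \leq 2\pi w_0/N^*$.

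Writing $t = w_0 \cos\tau$ with $\tau \in (0,\pi)$ and substituting $\lambda = w_0 \cos\theta$, polynomials $Q$ of degree $\leq N$ in $\lambda$ correspond bijectively to even trigonometric polynomials $V(\theta) = Q(w_0\cos\theta)$ of degree $\leq N$ in $\theta$, and
\[
\int_{-w_0}^{w_0} Q(\lambda)^2\, d\lambda \;=\; w_0 \int_0^\pi V(\theta)^2 \sin\theta \, d\theta.
\]
For the trial polynomial I would take the symmetrized Fej\'er kernel
\[
V(\theta) \;=\; \frac{F_M(\theta-\tau) + F_M(\theta + \tau)}{M + F_M(2\tau)}, \qquad F_M(\theta) \;=\; \frac{1}{M}\left(\frac{\sin(M\theta/2)}{\sin(\theta/2)}\right)^2,
\]
with $M = N^* + 1$ chosen so that $M-1 \leq N$. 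This $V$ is non-negative, even in $\theta$, of trigonometric degree exactly $M - 1 = N^*$, and satisfies $V(\tau) = 1$ by construction.

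The estimate then follows from the standard Fej\'er kernel facts $F_M \geq 0$, $\|F_M\|_\infty = M$, and $\int_{-\pi}^\pi F_M\,d\theta = 2\pi$, combined with appropriate use of the weight $\sin\theta$. The main obstacle will be extracting the sharp constant $2\pi$: the crude bound $\int V^2 \sin\theta\,d\theta \leq \|V\|_\infty \int V \,d\theta$ yields only $4\pi/M$, because $\|V\|_\infty$ can be as large as $2$. Closing the factor-of-two gap requires either a careful exploitation of the vanishing of $\sin\theta$ precisely at the points $\theta = 0, \pi$ where $V$ can attain its maximum, or a direct Parseval-type computation using the explicit Fourier coefficients $1 - |n|/M$ of $F_M$ together with the positive denominator $M + F_M(2\tau)$. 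The parity split $N^* = N$ versus $N^* = N-1$ is then a direct consequence of the degree bound $M - 1 \leq N$: for odd $N$ one can take $M = N + 1$, whereas for even $N$ only $M = N$ is available.
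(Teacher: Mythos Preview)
Your overall strategy---the extremal characterization of the Christoffel number plus a Fej\'er-kernel trial polynomial---is exactly the spirit of the paper's argument, but the execution diverges at a decisive point and the gap you flag at the end is real and not closed.

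The paper does not build a polynomial $Q$ of degree $\le N$ and bound $\int Q^2\,d\sigma$. Instead it constructs directly a \emph{non-negative} polynomial $S^{(t)}_{2N-2}$ of degree $\le 2N-2$ satisfying $S^{(t)}_{2N-2}(t)=1$, and shows via Gaussian quadrature (which, in this situation, delivers the same extremal inequality you quote, but for non-negative $S$ rather than squares $Q^2$) that the Christoffel number is bounded by $\int S^{(t)}_{2N-2}\,d\sigma \le \|w\|_\infty \int_{-w_0}^{w_0} S^{(t)}_{2N-2}\,d\lambda$. The trial polynomial is taken as a simple translate--dilate $S^{(t)}_{2N-2}(\lambda)=F_{2N-2}\bigl((\lambda-t)/(2w_0)\bigr)$, where $F_{2N-2}$ is an even polynomial of degree $4n$ whose pullback under $x=\cos\theta$ is a symmetrized Fej\'er kernel centred at $\pm\pi/2$. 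Because one only needs the $L^1$-norm of $S$, the Fej\'er normalization $\int_{-1}^1 F_{2N-2}\,dx\le \pi/(2n+1)$ immediately gives $\int_{-w_0}^{w_0}S^{(t)}_{2N-2}\,d\lambda\le 2\pi w_0/N^*$ with no extraneous factor. The factor of two you lose is precisely the cost of squaring: you take $V$ itself to be a Fej\'er kernel and then have to control $\int V^2$ rather than $\int V$.

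Your proposed repairs do not look viable. The maximum of your $V$ occurs near $\theta=\pm\tau$, not at $0$ or $\pi$, so the vanishing of $\sin\theta$ at the endpoints does not help in the generic case (e.g.\ $\tau=\pi/2$); and the Parseval route, using $\sum_{|n|<M}(1-|n|/M)^2\cos^2(n\tau)$, gives at best a constant of order $8\pi/3$ rather than $2\pi$. Finally, your explanation of the parity split is incorrect: the constraint $M-1\le N$ allows $M=N+1$ for \emph{every} $N$, so nothing in your construction forces $N^*=N-1$ when $N$ is even. In the paper the parity arises for a different reason: the auxiliary polynomial $F_{2N-2}$ is even of exact degree $4n$, and the integer constraint $4n\le 2N-2$ yields $2n+1=N$ for odd $N$ but only $2n+1=N-1$ for even $N$.
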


\begin{proof}
We fix $N \in \N$ and $t \in (-w_0,w_0)$.  
Below we construct a polynomial $S_{2N-2}^{(t)}$ of degree less or equal than $2N-2$ with the properties
\begin{equation}
\label{fejer1}
S_{2N-2}^{(t)}(\lambda) \geq 0 
\end{equation}
for all $\lambda \in \R$,
\begin{equation}
\label{fejer2}
S_{2N-2}^{(t)}(t) = 1 \, , 
\end{equation}
and 
\begin{equation}
\label{fejer3}
\int_{-w_0}^{w_0} S_{2N-2}^{(t)}(\lambda) d\lambda  \leq \frac{2 \pi w_0 }{N^*}  \, .
\end{equation}

To  estimate $\sum_{n=0}^N P_n(t)^2$ in terms of this polynomial let us first assume that $t$ is a zero of $P_N$. Let $\lambda_1 < \lambda_2 < \dots < \lambda_N$ denote the zeros of $P_N$ and assume that $t = \lambda_j$ for an index $j \in \{1, \dots, N\}$. 
Then the assumption $P_N(\lambda_j)=0$ together with \eqref{fejer1} and \eqref{fejer2} implies
$$
\frac{1}{\sum_{n=0}^N P_n(t)^2} =  \frac{1}{\sum_{n=0}^{N-1} P_n(\lambda_j)^2} = \frac{S^{(t)}_{2N-2}(\lambda_j)}{\sum_{n=0}^{N-1} P_n(\lambda_j)^2} \leq \sum_{k=1}^N \frac{S^{(t)}_{2N-2}(\lambda_k)}{\sum_{n=0}^{N-1} P_n(\lambda_k)^2} \, .
$$
We combine this estimate with the quadrature formula from Lemma \ref{lem:quadrature} and get
\begin{equation}
\label{christoffelest}
\frac{1}{\sum_{n=0}^N P_n(t)^2} \leq \int_\R S_{2N-2}^{(t)} (\lambda) \sigma(d\lambda) \, .
\end{equation}

If $t \in (-w_0,w_0)$ is not a zero of $P_N$ we define, for $s \in \R$, a polynomial
$$
\hat P_{N+1}(\lambda) = P_{N+1}(\lambda) + s P_N(\lambda) \, .
$$
Since $P_N(t) \neq 0$ we can choose $s$ such that $\hat P_{N+1}(t) = 0$. Now we can argue similarly as above: Again let $\lambda_1 < \dots < \lambda_{N+1}$ denote the zeros of $\hat P_{N+1}$ so that $t = \lambda_j$ for an index $j \in \{1, \dots, N+1\}$.  Now we apply Lemma \ref{lem:quadrature} directly to 
$$
\frac{1}{\sum_{n=0}^N P_n(t)^2}  = \frac{S^{(t)}_{2N-2}(\lambda_j)}{\sum_{n=0}^{N} P_n(\lambda_j)^2} \leq \sum_{k=1}^{N+1} \frac{S^{(t)}_{2N-2}(\lambda_k)}{\sum_{n=0}^{N} P_n(\lambda_k)^2} 
$$ 
and we obtain \eqref{christoffelest}. We have shown that the estimate \eqref{christoffelest} is valid for all $t \in (-w_0, w_0)$.
Now we combine this estimate with \eqref{fejer3}  and arrive at
$$
\frac{1}{\sum_{n=0}^N P_n(t)^2} \leq  \int_\R S_{2N-2}^{(t)} (\lambda) \sigma(d\lambda) \leq   \| w \|_\infty \int_{-w_0}^{w_0}S_{2N-2}^{(t)} (\lambda) d\lambda \leq  \frac{2 \pi w_0 \|w\|_\infty}{N^*}
$$
which is the claimed estimate.

It remains to construct the polynomial $S^{(t)}_{2N-2}$. Let $T_m$ denote the Chebysheff polynomial of degree $m \in \N_0$ that is defined by the equation $T_m(\cos \theta) = \cos(m\theta)$. For $N \in \N$, let $n \in \N$ denote the largest integer satisfying $4n \leq 2N-2$. Then, for $x \in \R$, we set 
$$
F_{2N-2}(x) = \frac 1{2n+1} + \frac 2{(2n+1)^2} \sum_{m = 1}^{2n} \lk 2n - m + 1 \rk (-1)^m T_{2m} (x) \, .
$$
This defines a polynomial of degree $4n \leq 2N-2$.

Let us note the following relation to the Fej{\'e}r kernel.
For $x \in (-1,1)$ write $x = \cos \theta$ with $\theta \in (0,\pi)$ and calculate 
\begin{align*}
F_{2N-2}(x) &= F_{2N-2}(\cos \theta) \\
&=  \frac 1{2n+1} + \frac 2{(2n+1)^2} \sum_{m = 1}^{2n} \lk 2n - m + 1 \rk (-1)^m \cos (2m \theta  ) \\
&= \frac 12 \frac{1}{(2n+1)^2} \lk \frac{\sin^2\lk (2n+1)(\theta- \frac \pi2)\rk}{\sin^2 \lk \theta- \frac \pi 2 \rk} +  \frac{\sin^2\lk (2n+1)(\theta + \frac \pi 2)\rk}{\sin^2 \lk \theta +\frac \pi 2 \rk}  \rk \, .
\end{align*}
This identity shows that $F_{2N-2}(0) = F_{2N-2}(\cos (\pi /2)) = 1$, that $F_{2N-2}(x) \geq 0$ for $x \in (-1,1)$, and that 
$$
\int_{-1}^1 F_{2N-2}(x) dx \leq \frac \pi {2n+1} \leq \frac{\pi}{N^*} \, .
$$
To see that $F_{2N-2}(x)$ is non-negative for all $x \in \R$ note that it vanishes together with its derivative at the points $x_k = \cos\lk \pi/2 + k \pi /(2n+1) \rk$, $k = 1, 2, \dots, n$. Together with the condition $F_{2N-2}(0) = 1$ these are $2n+1$ conditions. Since $F_{2N-2}$ is by definition an even polynomial of exact degree $4n$ these conditions determine the polynomial uniquely and show that it is non-negative.  

Thus, for $t \in (-w_0,w_0)$ and $\lambda \in \R$, we set
$$
S_{2N-2}^{(t)}(\lambda) = F_{2N-2} \lk \frac{\lambda-t}{2 w_0} \rk 
$$
and the properties \eqref{fejer1}, \eqref{fejer2}, and \eqref{fejer3} follow directly from the properties of $F_{2N-2}$. This completes the proof.
\end{proof}


\subsection{Proof of Theorem \ref{thm:detest}}
\label{ssec:detproof}

Let us first consider the adjacency matrix $A_n$. For this operator the claim follows directly from Proposition \ref{pro:gen}. We only have to show that the measures $\mu_{n,x}$ and $\sigma_0$ satisfy the conditions of the proposition with $N = R(x)$: Recall that the measure $\sigma_0$ is supported in the finite interval $(-2\sqrt{d-1}, 2 \sqrt{d-1})$ and that its density is bounded by $\gamma_d$. Hence it remains to establish that the $k$-th moments of $\sigma_0$ and $\mu_{n,x}$ agree for $k = 0, \dots, 2R(x)$. 

For $x \in G_n$ we pick a vertex $\hat x \in \mathcal T_d$ from the preimage of $x$ under the universal cover and consider the map $\iota_{\hat x}$ from \eqref{eq:iotamap}. It maps the neighborhood $B_{R(x)}(\hat x) \subset \mathcal T_d$ isomorphically to $B_{R(x)}(x) \subset G_n$ and as in \eqref{localid} we find $( \delta_x, A_n^k \delta_x )_{\ell^2(G_n)} = ( \delta_{\hat x}, A_{\mathcal T_d}^k \delta_{\hat x} )_{\ell^2(\mathcal T_d)}$
for $k = 0,1,\dots, 2R(x)$. By definition of the measures $\mu_{n,x}$ and $\sigma_0$, see \eqref{adjmeasure} and \eqref{treeadjmeasure},  this implies
$$
\int_\R \lambda^k \mu_{n,x}(d\lambda) = ( \delta_x, A_n^k \delta_x )_{\ell^2(G_n)} = ( \delta_{\hat x}, A_{\mathcal T_d}^k \delta_{\hat x} )_{\ell^2(\mathcal T_d)} =  \int_\R \lambda^k \sigma_0(d\lambda) 
$$
for $k = 0,1, \dots, 2R(x)$. 
Hence, these measures satisfy the conditions of Proposition \ref{pro:gen} and the proof of the first statement is complete. 

To prove the second claim we have to argue a bit more carefully. For the random operator $H_n(V)$ the spectral measure $\mu_{\mathcal T_d,x}(I;V)$ is not necessarily absolutely continuous, hence we cannot apply Proposition \ref{pro:gen} directly. 
(We note that $\sigma_\rho$ has bounded density by the Wegner estimate \eqref{wegner}, so we can apply the proposition to the measures $\sigma_\rho$ and $\E \left[ \mu_{n,x} \right]$. This immediately yields a similar estimate for the difference of $\E \left[ \mu_{n,x} \right]$ and $\sigma_\rho$, however, we want to prove a stronger statement.)

From identity \eqref{localid} we see that 
$$
\int_\R \lambda^k \mu_{n,x}(d\lambda;V) =( \delta_x, H_n(V)^k \delta_x )_{\ell^2(G_n)} = ( \delta_{\hat x}, H_{\mathcal T_d}(V)^k \delta_{\hat x} )_{\ell^2(\mathcal T_d)} =  \int_\R \lambda^k \mu_{\mathcal T_d,\hat x}(d\lambda;V)  
$$
is valid for all $k = 0, \dots, 2R(x)$. Hence, we can apply estimate \eqref{cmsbound} to the measures $\mu_{n,x}$ and $\mu_{\mathcal T_d,\hat x}$ and obtain
\begin{equation}
\label{randomcms}
\left| \mu_{n,x}\lk (-\infty,t]; V \rk - \mu_{\mathcal T_d,\hat x}\lk (-\infty,t]; V \rk \right| \leq \frac 1{\sum_{n=0}^{R(x)}P_n(t)^2} \, ,
\end{equation}
where $P_n$ denotes the orthonormal polynomial of degree $n$ with respect to the random measure   $\mu_{\mathcal T_d,\hat x}$.

From the general property \eqref{spectralsupport} we learn the the support of this measure is almost surely contained in the interval $(-2\sqrt{d-1}-\rho_0,2\sqrt{d-1}+\rho_0)$. Hence, in the same way as in the beginning of the proof of Proposition \ref{pro:gen}  we can reduce the problem to $t \in (-2\sqrt{d-1}-\rho_0, 2\sqrt{d-1} + \rho_0)$. 
For such $t$ estimate \eqref{christoffelest} is valid and combining it with \eqref{randomcms} we find that the bound
$$
\left| \mu_{n,x}\lk (-\infty,t]; V \rk - \mu_{\mathcal T_d,\hat x}\lk (-\infty,t]; V \rk \right| \leq  \int_\R S^{(t)}_{2R(x)-2}(\lambda) \mu_{\mathcal T_d,\hat x}(d\lambda;V) 
$$
holds almost surely.
Hence, recalling definition \eqref{treedos}, the Wegner estimate \eqref{wegner}, and the bound \eqref{fejer3}, we obtain 
$$
\E \left| \mu_{n,x}\lk (-\infty,t]; V \rk - \mu_{\mathcal T_d,\hat x}\lk (-\infty,t]; V \rk \right|  \leq \int_\R S^{(t)}_{2R(x)-2}(\lambda) \sigma_\rho(d\lambda) \leq \frac{2\pi \| \rho \|_\infty ( 2 \sqrt{d-1} + \rho_0 )}{R^*(x)} \, .
$$
This proves the second claim and completes the proof of Theorem \ref{thm:detest}.


\section{Asymptotically almost sure bounds on the rate of convergence}
\label{sec:cycles}

In this section we combine the deterministic estimates of Theorem~\ref{thm:detest} with bounds on  the number of cycles in random regular graphs to prove the results from Section~{\ref{sec:results}.

\subsection{Acyclic regions in random regular graphs}
\label{ssec:acyclic}

Here we collect information about cycles in random regular graphs based on results from \cite{Kay81b,KayWorWys04}. We establish the fact that typically at most sites the graph looks locally like a tree. This is made precise in Lemma~\ref{lem:goodset} below.

For a set $A$ we write $|A|$ for the number of elements, in particular for a subset $F \subset G_n$ of a graph, $|F|$ denotes the number of vertices. 
For a graph $G_n \in \mathcal G_{n,d}$ and $k \in \N$ let $\mathcal C(k)$ denote the set of cycles of length $k$ in $G_n$. 

\begin{lemma}
\label{lem:cycles}
For $3 \leq k \leq nd/4 - 2d^2$ one has
$$
\mathcal E_{n,d} \left[ |\mathcal C(k)| \right] \leq \frac{(d-1)^k}{2k} \lk 1 +  \frac 8n \lk d+\frac{k}{2d} \rk \rk^k \, .
$$
\end{lemma}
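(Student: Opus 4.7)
The plan is to work in the configuration (pairing) model of random regular graphs and reduce the problem to a direct combinatorial count, following the classical approach developed in \cite{Kay81b, KayWorWys04}. In the pairing model one assigns $d$ labelled half-edges to each vertex and chooses a uniform random perfect matching of the $nd$ half-edges; conditioning the resulting multigraph on being simple yields the uniform measure $\mathcal P_{n,d}$. Write $\mathcal E_P$ for expectation in the pairing model.

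The first step is a direct cycle count. There are $\binom{n}{k}(k-1)!/2$ unordered $k$-cycles on $k$ distinct vertices of $G_n$. For any fixed cyclic ordering, each of the $k$ vertices contributes $d(d-1)$ ordered choices of half-edges participating in the cycle; the $k$ along-cycle pairings are then forced, and the remaining $nd-2k$ half-edges may be matched in $(nd-2k-1)!!$ of the $(nd-1)!!$ total ways. Elementary manipulation produces
\begin{equation*}
\mathcal E_P\bigl[|\mathcal C(k)|\bigr] \;=\; \frac{(d-1)^k}{2k}\,\prod_{j=0}^{k-1}\frac{(n-j)d}{nd-2j-1}.
\end{equation*}

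The second step is to estimate the product. Each factor equals $1 + (2j+1-jd)/(nd-2j-1)$, and the assumption $k \leq nd/4 - 2d^2$ forces $nd - 2j - 1 \geq nd/2$ for all $0 \leq j \leq k-1$, while the numerator is bounded in absolute value by $kd + 2k + 1$. Collecting constants yields the per-factor bound $1 + (8/n)(d + k/(2d))$, whose $k$-th power is the exponential correction on the right-hand side.

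The main obstacle is the last step: passing from $\mathcal E_P$ to $\mathcal E_{n,d}$. Conditioning on simplicity costs a multiplicative factor $1/\mathcal P_P(\mathrm{simple})$ which, while bounded in $n$ (converging to $e^{(d^2-1)/4}$ for fixed $d$), does not fit cleanly inside the stated exponential correction when $k$ is small. The cleanest resolution is to invoke the precise short-cycle enumeration formulas of \cite{Kay81b, KayWorWys04}, which are proved by a switching argument that tracks exactly how conditioning on simplicity affects cycle counts and already yield the bound in the form stated in the lemma; alternatively one establishes the needed contiguity between the two models directly.
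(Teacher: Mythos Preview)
Your configuration-model computation of $\mathcal E_P[|\mathcal C(k)|]$ is correct, and the per-factor estimate is fine (indeed for $d\ge 3$ every factor with $j\ge 1$ is already $\le 1$, so the product is essentially $\le 1+O(1/n)$). The difficulty is exactly the one you flag: passing from $\mathcal E_P$ to $\mathcal E_{n,d}$. Dividing by $\mathcal P_P(\text{simple})$ costs an $n$-independent factor bounded below by a constant of order $e^{(d^2-1)/4}$, and for small $k$ the allowed correction $(1+\tfrac{8}{n}(d+\tfrac{k}{2d}))^k$ is $1+O(k/n)$, so there is no room to absorb it. Your proposed ``resolution'' --- to invoke the switching formulas of \cite{Kay81b,KayWorWys04} which already give the bound --- is not a proof but a deferral: once you appeal to those formulas, the pairing-model calculation you carried out plays no role. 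Contiguity likewise only gives $\mathcal E_{n,d}\le C_d\,\mathcal E_P$ with the same unabsorbable constant.

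The paper avoids the transfer issue entirely by working directly in $\mathcal G_{n,d}$. It writes
\[
\mathcal E_{n,d}[|\mathcal C(k)|]\;=\;\sum_{\mathfrak c\subset K_n}\mathcal P_{n,d}[\,e(\mathfrak c)\subset e(G_n)\,]\;\le\;\frac{n^k}{2k}\,\max_{\mathfrak c}\mathcal P_{n,d}[\,e(\mathfrak c)\subset e(G_n)\,],
\]
and then applies a specific input from \cite{KayWorWys04}, namely the edge-subset inclusion bound
\[
\mathcal P_{n,d}[\,e(\mathfrak c)\subset e(G_n)\,]\;\le\;\frac{(d-1)^k}{n^k}\Bigl(\frac{nd}{nd-4d^2-2k}\Bigr)^k,
\]
valid for $k\le nd/4-2d^2$. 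The point is that this inequality is stated for the \emph{simple} model, so no conditioning correction ever appears; the claimed bound then follows by elementary algebra. If you want to salvage your route, the honest fix is to quote this same edge-inclusion probability (which is what the switching argument in \cite{KayWorWys04} actually produces) rather than the pairing-model identity.
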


\begin{proof}
Let $G_n$ be an arbitrary graph from $\mathcal G_{n,d}$ and let $e(G_n)$ denote the set of edges of $G_n$. We also introduce $K_n$,  the complete graph of $n$ vertices and the set $e(K_n)$ of edges of $K_n$. We consider a cycle $\mathfrak c \subset K_n$ of length $k$ and its set of edges $e(\mathfrak c)$. To estimate the expectation of $|\mathcal C(k)|$ we use the following relation to the number of cycles $\mathfrak c$ of length $k$ in $K_n$:
\begin{equation}
\label{expck}
\mathcal E_{n,d} \left[ |\mathcal C(k) | \right] = \sum_{\mathfrak c \subset K_n} \mathcal P_{n,d} \left[ e(\mathfrak c) \subset e(G_n) \right] \leq  \left| \left\{ \mathfrak c \subset K_n \right\} \right| \max_{\mathfrak c \subset K_n} \mathcal P_{n,d} \left[ e(\mathfrak c) \subset e(G_n) \right] \, .
\end{equation}

From \cite[Theorem 3]{KayWorWys04} (see also \cite[Theorem 2.10]{Kay81b}) it follows that for $k \leq nd/4 - 2d^2$ one has
$$
\mathcal P_{n,d} \left[ e(\mathfrak c) \subset e(G_n) \right] \leq \frac{d^k(d-1)^k}{2^k} \lk \frac 2{nd-4d^2-2k} \rk^k = \frac{(d-1)^k}{n^k} \lk \frac{nd}{nd-4d^2-2k} \rk^k \, .
$$
Moreover, a counting argument shows that 
$$
\left| \left\{ \mathfrak c \subset K_n \right\} \right| = \frac{n!}{2k (n-k)!} \leq \frac{n^k}{2k} \, .
$$
We combine these estimates with \eqref{expck} and get
$$
\mathcal E_{n,d} \left[ |\mathcal C(k) | \right] \leq \frac{(d-1)^k}{2k} \lk \frac{nd}{nd-4d^2-2k} \rk^k \, .
$$
A simple estimate using the fact that $k \leq nd/4 - 2d^2$ yields the claim.
\end{proof}
 
\begin{rem}
A similar argument leads to a lower bound on $\mathcal E_{n,d} \left[ |\mathcal C(k) | \right]$ with the same leading term $(d-1)^k/2k$. In this sense the estimate in Lemma \ref{lem:cycles} is sharp. 
\end{rem}
   
The following result is a simplified version of \cite[Lemma 4]{DumPal12}. It quantifies how well one can approximate a graph $G_n \in \mathcal G_{n,d}$ by a tree. 

\begin{lemma}
\label{lem:goodset}
For $k \in \N$ let  $F_n(k) \subset G_n$ denote the set of vertices $x \in G_n$ such that $B_k(x)$ is acyclic.
Then for all $\epsilon > 0$ and $k \leq n/4d-2d^2$ we have
$$
\mathcal P_{n,d} \left[ 1- \frac {|F_n(k)|}n  > \epsilon \right] \leq \frac 1{2 n \epsilon}  \frac{(d-1)^{2k+1/2}}{\sqrt{d-1}-1}  \lk 1 + \frac 8n  \lk  d+ \frac k{d}  \rk \rk^{2k}  \, .
$$
\end{lemma}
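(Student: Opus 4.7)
The plan is to combine Markov's inequality with the cycle-count of Lemma~\ref{lem:cycles}. First, by Markov's inequality,
$$
\mathcal P_{n,d}\!\left[1 - \tfrac{|F_n(k)|}{n} > \epsilon\right] \leq \frac{1}{n\epsilon}\, \mathcal E_{n,d}\bigl[|G_n \setminus F_n(k)|\bigr],
$$
so it suffices to bound the expected number of vertices $x$ for which $B_k(x)$ contains a cycle. Since any cycle contained in $B_k(x)$ has length at most $2k$ (its vertices must all lie within distance $k$ of $x$), we may decompose
$$
|G_n \setminus F_n(k)| \leq \sum_{\ell=3}^{2k}\sum_{C\in \mathcal C(\ell)}\bigl|\{x \in G_n : C\subset B_k(x)\}\bigr|.
$$

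Next comes the key deterministic count: for a fixed cycle $C$ of length $\ell$, the inclusion $C\subset B_k(x)$ forces $x$ to lie within distance $k$ of every vertex of $C$. Since the farthest vertex of $C$ from any $y\in C$ (along the cycle) is at distance $\lfloor \ell/2\rfloor$, the closest vertex of $C$ to $x$ must lie within graph-distance $r := k - \lfloor\ell/2\rfloor$. Bounding the $r$-neighborhood of $C$ by the corresponding count in the tree obtained by attaching $(d-1)$-ary branches to each cycle vertex gives
$$
\bigl|\{x : C\subset B_k(x)\}\bigr| \leq \ell\,(d-1)^{k-\lfloor \ell/2\rfloor}.
$$

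Taking expectations and applying Lemma~\ref{lem:cycles} (its hypothesis $\ell \leq nd/4 - 2d^2$ follows from $\ell \leq 2k$ and the assumed bound $k \leq n/(4d)-2d^2$, since then $2k \leq n/(2d) - 4d^2 \leq nd/4 - 2d^2$ for $d \geq 3$) yields
$$
\mathcal E_{n,d}[|G_n \setminus F_n(k)|] \leq \tfrac 12 \sum_{\ell=3}^{2k}(d-1)^{k+\lceil \ell/2\rceil}\!\left(1 + \tfrac{8}{n}\bigl(d+\tfrac{\ell}{2d}\bigr)\right)^{\!\ell}.
$$
I would then pull the error factor out of the sum (using $\ell \leq 2k$ and $\ell/(2d) \leq k/d$), pair the index $\ell = 2j-1$ with $\ell = 2j$ so that $\lceil\ell/2\rceil = j$ occurs twice for each $j \in \{2,\dots,k\}$, and evaluate the resulting geometric series with common ratio $\sqrt{d-1}$ to produce the target factor $(d-1)^{2k+1/2}/(\sqrt{d-1}-1)$. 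Dividing by $n\epsilon$ then yields the announced bound.

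The main obstacle is the fine arithmetic of this last step. Extracting the sharp denominator $\sqrt{d-1}-1$ (rather than the naive $d-2$ produced by a direct integer-indexed geometric sum) requires recognizing the paired sum $2\sum_{j=2}^{k}(d-1)^j$ as a geometric progression whose ratio, when reinterpreted with the half-integer offset, is $\sqrt{d-1}$, so that a factor $\sqrt{d-1}$ gets absorbed into the coefficient rather than the exponent. A looser application of $\lceil\ell/2\rceil \leq (\ell+1)/2$ would give $(d-1)^{2k+1}$ in place of $(d-1)^{2k+1/2}$, spoiling the constant, so the parity-based pairing is essential.
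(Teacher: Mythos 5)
Your overall structure --- Markov's inequality, the union over cycles, the deterministic neighborhood count, and Lemma~\ref{lem:cycles} --- matches the paper exactly. The gap is precisely in the step you flag as ``the main obstacle'', and your proposed fix does not work. With the floored exponent, each pair $\ell\in\{2j-1,2j\}$ contributes $2(d-1)^{j}$, so the paired sum is
$$
\sum_{\ell=3}^{2k}(d-1)^{\lceil \ell/2\rceil} \;=\; 2\sum_{j=2}^{k}(d-1)^j,
$$
a geometric series in the \emph{integer} variable $j$ with ratio $d-1$, not $\sqrt{d-1}$. There is no ``half-integer offset'' to reinterpret: the ratio of consecutive terms of this sum is $d-1$, full stop, and summing gives at best $2(d-1)^{k+1}/(d-2)$. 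Multiplying by $\tfrac 12(d-1)^k$ yields $(d-1)^{2k+1}/(d-2)$, and the inequality $(d-1)^{2k+1}/(d-2)\le \tfrac 12(d-1)^{2k+1/2}/(\sqrt{d-1}-1)$ reduces to $(d-2)^2\le 0$, which fails for every $d\ge 3$. So your route proves a strictly weaker statement than the lemma claims: you can recover the lemma only after enlarging the constant by a factor lying in $(1,2)$.

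What the paper does differently is keep the \emph{real-valued} exponent in the neighborhood estimate: it writes $|N_{\mathfrak c}(k)|\le m\,(d-1)^{k-m/2}$ with $m/2$ not floored. Then the cycle-count multiplies this by $(d-1)^m/(2m)$ to give $\tfrac 12 (d-1)^{k+m/2}$, and the index sum becomes $(d-1)^k\sum_{m=3}^{2k}(\sqrt{d-1})^m$ --- a genuine geometric progression with ratio $\sqrt{d-1}$. No pairing is needed; the $\sqrt{d-1}-1$ denominator and the $(d-1)^{2k+1/2}$ factor fall out directly. The reason the fractional-exponent bound is legitimate is simply that $(d-1)^{k-m/2}\ge (d-1)^{k-\lceil m/2\rceil}$, and for $m$ odd the integer distance threshold $r\le\lfloor k-m/2\rfloor=k-\lceil m/2\rceil$ is what actually enters. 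If you want to keep an integer-exponent estimate, you should carry $(d-1)^{k-\lceil \ell/2\rceil}$ (not $k-\lfloor\ell/2\rfloor$) through the computation; then $\ell-\lceil\ell/2\rceil=\lfloor\ell/2\rfloor$ and the sum becomes $\sum_{\ell=3}^{2k}(d-1)^{\lfloor\ell/2\rfloor}=2\sum_{j=1}^{k-1}(d-1)^j+(d-1)^k\le (d-1)^{k+1/2}\big(1+(\sqrt{d-1})^{-1}\big)/(\sqrt{d-1}-1)\cdot(\ldots)$, and again one is led naturally back to treating $\sqrt{d-1}$ as the ratio, which is what the paper's presentation does at the outset.
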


\begin{proof}
We apply the Markov inequality, namely that for all $\epsilon > 0$,
\begin{equation}
\label{markovcycle}
\mathcal P_{n,d} \left[ 1- \frac{|F_n(k)|}{n} > \epsilon \right] = \mathcal P_{n,d} \left[ |G_n \setminus F_n(k)| > n \epsilon \right] \leq \frac 1{n \epsilon} \mathcal E_{n,d} \left[ |G_n \setminus F_n(k) | \right] \, .
\end{equation}
Hence, we have to estimate $\mathcal E_{n,d} \left[ |G_n \setminus F_n(k) | \right] $.

Consider $\mathfrak c \in \mathcal C(m)$, i.e. a cycle $\mathfrak c \subset G_n$ of length $m\in \N$. For  $k \in \N$ set $N_{\mathfrak c}(k) = \{ x \in G_n \, : \, \mathfrak c \subset B_k(x) \}$. For $k < m/2$ the set $N_{\mathfrak c}(k)$ is empty. For $k \geq m/2$ it is included in the set of vertices that are at distance less or equal than $k-m/2$ from  $\mathfrak c$. Hence, we have $|N_{\mathfrak c}(k)| = 0$ for $k < m/2$ and
$$
|N_{\mathfrak c}(k)| \leq m (d-1)^{k - m/2}
$$
for $k \geq m/2$. For each vertex $x \in G_n \setminus F_n(k)$ the neighborhood $B_k(x)$ contains at least one cycle so that $G_n \setminus F_n(k) \subset \bigcup_{\mathfrak c} N_{\mathfrak c}(k)$. Hence,
$$
|G_n \setminus F_n(k) | \leq \sum_{m \geq 3} \sum_{\mathfrak c \in \mathcal C(m) } \left| N_{\mathfrak c}(k)  \right| \leq \sum_{m = 3}^{2k} m (d-1)^{k-m/2} |\mathcal C(m)| \, .
$$

We combine this estimate with Lemma \ref{lem:cycles}  and conclude that
\begin{align*}
\mathcal E_{n,d} \left[ |G_n \setminus F_n(k) | \right] &\leq \frac 12 \sum_{m=3}^{2k}  (d-1)^{k+m/2}  \lk 1 +  \frac 8n \lk d+ \frac m{2d} \rk \rk^m \\
&\leq \frac 12 \frac{(d-1)^{2k+1/2}}{\sqrt{d-1}-1}  \lk 1 + \frac 8n \lk d+\frac k{d} \rk \rk^{2k} \, .
\end{align*}
Inserting this into \eqref{markovcycle} finishes the proof.
\end{proof}


\subsection{Proof of the main results}

With Lemma \ref{lem:goodset} at hand we can deduce the results of Section~\ref{sec:results} from Theorem \ref{thm:detest}. Here we give the proofs of Theorem~\ref{thm:esd} and Theorem~\ref{thm:grow}. The proof of Theorem~\ref{thm:adj} is similar.

\begin{proof}[Proof of Theorem \ref{thm:esd}]

Recall the definition of $R(x)$ from the beginning of Section \ref{sec:detest} and the definition of $F_n(k)$ from Lemma \ref{lem:goodset}. We can always estimate $R(x) \geq 1$ and for $x \in F_n(k)$ we have $R(x) \geq k$. Thus Theorem \ref{thm:detest} implies that, for all $k \in \N$,
\begin{align}
\nonumber
&\sup_{t \in \R} \left[ \frac 1n \sum_{x\in G_n} \E \left| \mu_{n,x}((-\infty,t];V) - \mu_{\mathcal T_d,\hat x}((-\infty,t];V)  \right| \right] \\
\nonumber
&\leq 2\pi  \| \rho \|_\infty \lk 2 \sqrt {d-1} + \rho_0 \rk \frac 1n \lk \frac 1{k-1} |F_n(k)| + |G_n \setminus F_n(k) | \rk \\
\label{withbadset}
&\leq  2\pi  \| \rho \|_\infty \lk 2 \sqrt {d-1} + \rho_0 \rk  \lk \frac 1{k-1}  + \frac 1n |G_n \setminus F_n(k) | \rk \, .
\end{align}

We apply Lemma \ref{lem:goodset} to estimate the second term. For $\epsilon > 0$ let $\Omega(\epsilon,k)$ denote the event $\{ |G_n \setminus F_n(k)| \leq n \epsilon /(k-1)\}$ and note that by Lemma \ref{lem:goodset}
\begin{equation}
\label{finalprob}
\mathcal P_{n,d} \left[ \Omega(\epsilon,k) \right] \geq 1 -  \frac {k-1}{2 n \epsilon}  \frac{(d-1)^{2k+1/2}}{\sqrt{d-1}-1}  \lk 1 + \frac 8n \lk d+\frac k{d} \rk \rk^{2k}  \, .
\end{equation}
Now we choose $k = \kappa \log_{d-1}(n) +1$ with $\kappa < 1/2$. Then we see that for fixed $\epsilon > 0$ the event $\Omega(\epsilon,\kappa \log_{d-1}(n)+1)$ holds asymptotically almost surely. With this choice of $k$ estimate \eqref{withbadset} shows that the bound
\begin{align*}
\sup_{t \in \R} \left[ \frac 1n \sum_{x\in G_n} \E \left| \mu_{n,x}((-\infty,t];V) - \mu_{\mathcal T_d,\hat x}((-\infty,t];V)  \right| \right]   & \leq \frac{ (1+\epsilon) 2\pi }{k-1} \| \rho \|_\infty \lk 2 \sqrt {d-1} + \rho_0 \rk   \\
& \leq  \frac{ (1+\epsilon) 2\pi }{\kappa \log_{d-1}(n)}   \| \rho \|_\infty \lk 2 \sqrt {d-1} + \rho_0 \rk
\end{align*}
holds on the event $\Omega(\epsilon,\kappa \log_{d-1}(n)+1)$, so the bound holds asymptotically almost surely. This proves the first part of the theorem. In view of \eqref{romeasures} and \eqref{treedos} the second statement is a direct consequence of the first.
\end{proof}

\begin{proof}[Proof of Theorem \ref{thm:grow}]
Note that the measure $\tilde \sigma_0$, given in \eqref{eq:rescaledmes}, is supported in $(-1,1)$ and that its density is bounded by 
$$
\tilde \gamma_d = \frac{d \sqrt{d-1}}{2\pi} \frac{1}{\sqrt{d^2-4(d-1)}} \ \textnormal{if} \ d \leq 6 \quad \textnormal{and} \quad \tilde \gamma_d =   \frac 2 \pi \frac{d(d-1)}{d^2} \ \textnormal{if} \ d \geq 7 \, .
$$
Thus we can argue as in Section \ref{ssec:detproof} to obtain the estimate 
\begin{equation}
\label{basicgrowest}
\sup_{t \in \R}  \left[ \left| \tilde \mu_{n,x}((-\infty,t]) - \tilde \sigma_0((-\infty,t])  \right| \right] \leq \frac{2 \pi \tilde \gamma_d}{R(x)^*} \, .
\end{equation}
Hence, on the event $\Omega(\epsilon,k) = \{ |G_n \setminus F_n(k)| \leq n \epsilon /(k-1) \}$ we find that
\begin{align*}
\sup_{t \in \R} \left[ \frac 1n \sum_{x\in G_n}  \left| \tilde \mu_{n,x}((-\infty,t]) - \tilde \sigma_0((-\infty,t])  \right| \right] 
&\leq 2\pi \tilde \gamma_{d_n} \lk \frac 1{k-1} + \frac 1n |G_n \setminus F_n(k)| \rk \\
&\leq  (1+\epsilon) \tilde \gamma_{d_n}  \frac {2\pi}{k-1} \, .
\end{align*}
The probability of $\Omega(\epsilon,k)$ is bounded by \eqref{finalprob}. So we can again choose  $k = \kappa \ln (n)/\ln(d_n-1)+1$ with $\kappa < 1/2$ and we note that the assumption $d_n \leq (n/\ln(n))^{1/3}$ ensures that the condition of Lemma~\ref{lem:goodset} is satisfied. We deduce that the event $\Omega(\epsilon,\kappa \ln(n)/\ln(d_n-1)+1)$ holds, for all $\epsilon > 0$, asymptotically almost surely. Thus the estimate
$$
\sup_{t \in \R} \left[ \frac 1n \sum_{x\in G_n}  \left| \tilde \mu_{n,x}((-\infty,t]) - \tilde \sigma_0((-\infty,t])  \right| \right]  \leq \frac{4(1+2\epsilon)}{\kappa}  \frac {\ln (d_n-1)}{\ln( n)} 
$$
holds, for all $\epsilon > 0$, asymptotically almost surely. Here we used the fact that $\tilde \gamma_{d_n}$ tends to $2/\pi$ as $d_n$ tends to infinity.
It remains to note that 
\begin{equation}
\label{scestimate}
\sup_{t \in \R} \left[ \left| \tilde \sigma_0 \lk (-\infty,t]\rk  - \sigma_\s \lk (-\infty,t]  \rk \right| \right] \leq (1+\epsilon)  \frac 2{\pi d_n}
\end{equation}
for $d_n$ large enough. Thus applying the triangle inequality yields the first claim. By \eqref{growmeasures} the second claim follows from the first.
\end{proof}


\section{Estimates for the Green function and delocalization of eigenvectors}
\label{sec:eigenfunctions}

Here we  apply the results of the previous sections to compare Green functions on the finite graph $G_n$ with the respective Green functions on the infinite tree. Then we deduce delocalization of eigenvectors for the adjacency matrix and for random Schr\"odinger operators. In particular,  we show that eigenvectors of the operator $H_n(V)$ with eigenvalues in the regime of absolutely continuous spectrum of the infinite-volume operator $H_{\mathcal T_d}(V)$ are not uniformly localized as $n$ tends to infinity.  

\subsection{Convergence of the Green function}
\label{ssec:green}

Let us first introduce some notation. 
For $z  \in \C_+$ and $x \in G_n$ we consider diagonal elements of the Green function, the Stieltjes transform of the local spectral measures:
\begin{alignat}{2}
\nonumber
\Gamma_n(x,z) &=  ( \delta_x, ( A_n-z )^{-1} \delta_x )_{\ell^2(G_n)} &=& \int_\R (\lambda-z)^{-1} \mu_{n,x}(d\lambda)  \, ,\\
\nonumber
\tilde \Gamma_n(x,z) &= ( \delta_x, ( \tilde A_n-z )^{-1} \delta_x )_{\ell^2(G_n)}  &=&  \int_\R (\lambda-z)^{-1} \tilde\mu_{n,x}(d\lambda) \, , \\
\label{greenrep}
\Gamma_n(x,z;V) &= ( \delta_x, ( H_n(V)-z )^{-1} \delta_x )_{\ell^2(G_n)}  \ &=& \int_\R (\lambda-z)^{-1} \mu_{n,x}(d\lambda;V) \, .
\end{alignat}

In the same way, we  define the corresponding Green function on the infinite tree $\mathcal T_d$. One can either use resolvent expansions and the geometric structure of the tree or the explicit representations of the measures $\sigma_0$ and $\sigma_\s$ (see \eqref{kesten} and \eqref{sc}) to derive that, for any $\hat x \in \mathcal T_d$,
\begin{equation}
\label{treegreenrep}
\Gamma_{\mathcal T_d}(\hat x,z) = \Gamma_{\mathcal T_d}(z) = \int_\R (\lambda-z)^{-1} \sigma_0(d\lambda) = \frac{-z(d-2) - d \sqrt{z^2-4(d-1)}}{2 (z^2-d^2)}
\end{equation}
and $\lim_{n \to \infty} \tilde \Gamma_{\mathcal T_{d_n}}(\hat x,z) = \Gamma_\s(z)$
with
$$
\Gamma_\s(z) = \int_\R (\lambda- z)^{-1} \sigma_\s(d\lambda) = - \frac 12 \lk z - \sqrt{z^2-4} \rk \, .
$$
Here we specify the root of a complex number as the one with positive imaginary part. As in Theorem \ref{thm:esd}, for given $x \in G_n$, we choose $\hat x \in \mathcal T_d$ from the preimage of $x$ under the universal cover.

\begin{corollary}
\label{cor:green}
Let $(\varepsilon_n)_{n \in \N}$ be a sequence of positive numbers such that $\varepsilon_n = o(1)$ as $n \to \infty$. Then the following estimates hold asymptotically almost surely.

\begin{enumerate}

\item
Assume that $(z_n)_{n \in \N}$ is a sequence of complex numbers with $\Im z_n \geq C (\varepsilon_n \log_{d-1}(n) )^{-1}$ for a constant $C > 16 \pi \gamma_d \sqrt{d-1}$.  Then we have
$$
\frac 1n \sum_{x \in G_n} \left| \Im \Gamma_n(x,z_n) - \Im \Gamma_{\mathcal T_d} (z_n) \right| \leq \varepsilon_n \, .
$$

\item Let $d_n \to \infty$ as $n \to \infty$  with $d_n \leq (n/ \ln(n))^{1/3}$ and assume that $(z_n)_{n \in \N}$ is a sequence of complex numbers with  $\Im z_n \geq C \varepsilon_n^{-1} \lk \log_{d_n-1}^{-1}(n) +  d_n^{-1} \rk$
for a constant $C > 16$.  Then we have
$$
\frac 1n \sum_{x \in G_n} | \Im \tilde \Gamma_n(x,z_n) - \Im \Gamma_\s(z_n) | \leq \varepsilon_n \, .
$$

\item Let the random potential $V$ satisfy the conditions of Theorem \ref{thm:esd} and assume 
 that $(z_n)_{n \in \N}$ is a sequence of complex numbers with $\Im z_n \geq C (\varepsilon_n \log_{d-1}(n) )^{-1}$ for a constant $C > 8\pi \| \rho \|_\infty \lk 2 \sqrt{d-1} + \rho_0 \rk$.  Then we have
$$
\frac 1n \sum_{x \in G_n} \E \left|\Im \Gamma_n(x,z_n;V) -  \Im \Gamma_{\mathcal T_d} (\hat x,z_n;V) \right| \leq \varepsilon_n \, .
$$
\end{enumerate}
\end{corollary}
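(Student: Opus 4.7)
The plan is to convert the cumulative-distribution bounds supplied by Theorems \ref{thm:adj}, \ref{thm:grow}, and \ref{thm:esd} into bounds on the imaginary part of the Green function via the Poisson-kernel representation
$$
\Im \Gamma_n(x, E+i\eta) = \int_\R P_\eta(\lambda - E)\, \mu_{n,x}(d\lambda), \qquad P_\eta(u) = \frac{\eta}{u^2+\eta^2},
$$
together with the analogous identities for $\Gamma_{\mathcal T_d}(z)$, $\tilde\Gamma_n(x,z)$, $\Gamma_{\s}(z)$, $\Gamma_n(x,z;V)$, and $\Gamma_{\mathcal T_d}(\hat x,z;V)$ that follow from \eqref{greenrep} and \eqref{treegreenrep}. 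Write $z_n = E_n + i\eta_n$ and denote by $F_{n,x}$ and $F_0$ the CDFs of $\mu_{n,x}$ and $\sigma_0$. Since the two measures are probabilities with a common total mass, $F_{n,x}-F_0$ vanishes at $\pm\infty$, while $P_{\eta_n}$ also decays at infinity; so integration by parts gives, with no boundary contribution,
$$
\Im \Gamma_n(x,z_n) - \Im \Gamma_{\mathcal T_d}(z_n) = -\int_\R P_{\eta_n}'(\lambda - E_n)\,\bigl( F_{n,x}(\lambda) - F_0(\lambda)\bigr) d\lambda.
$$

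Next I would take the absolute value, average over $x \in G_n$, and swap the sum with the $\lambda$-integral (justified by Fubini since the integrand is non-negative), obtaining
$$
\frac{1}{n}\sum_{x \in G_n} \left| \Im \Gamma_n(x,z_n) - \Im \Gamma_{\mathcal T_d}(z_n) \right| \leq \int_\R |P_{\eta_n}'(\lambda-E_n)|\cdot \frac{1}{n}\sum_{x\in G_n} |F_{n,x}(\lambda)-F_0(\lambda)|\, d\lambda.
$$
The inner average is bounded, uniformly in $\lambda$, by $C \gamma_d \sqrt{d-1}\log_{d-1}^{-1}(n)$ asymptotically almost surely, by Theorem \ref{thm:adj}. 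A direct one-variable computation (substituting $v = u^2+\eta^2$ in $|P_\eta'(u)| = 2\eta|u|/(u^2+\eta^2)^2$) yields $\int_\R |P_\eta'(u)|\,du = 2/\eta$. Inserting the hypothesis $\Im z_n \geq C(\varepsilon_n \log_{d-1}(n))^{-1}$ with $C > 16\pi\gamma_d\sqrt{d-1}$ then produces the required bound; the factor 2 between $8\pi$ in Theorem \ref{thm:adj} and $16\pi$ in the corollary is precisely the $L^1$-norm of $P_\eta'$ after the $1/\eta$ is absorbed.

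Parts (2) and (3) follow the same template. For (2), I would replace $\mu_{n,x}$ by $\tilde\mu_{n,x}$ and $\sigma_0$ by $\sigma_{\s}$, integrate by parts as above, and apply Theorem \ref{thm:grow}; the density-of-states factor $\log_{d-1}^{-1}(n)$ is replaced by $\ln(d_n-1)/\ln(n)+1/d_n$, and again the factor of 2 between the constants $8$ and $16$ comes from $\|P_{\eta}'\|_{L^1} = 2/\eta$. For (3), the spectral measures $\mu_{n,x}(\cdot;V)$ and $\mu_{\mathcal T_d,\hat x}(\cdot;V)$ are random; I would perform the integration by parts pathwise in $V$ (both measures have support contained almost surely in a fixed compact interval by the standard bound $\operatorname{supp}\mu_{\mathcal T_d,\hat x} \subset [-2\sqrt{d-1}-\rho_0, 2\sqrt{d-1}+\rho_0]$ used in the proof of Theorem \ref{thm:detest}), then take the expectation over $V$, exchange it with the $\lambda$-integral by Fubini, and apply Theorem \ref{thm:esd}.

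The argument is essentially a soft one: the content sits in Theorems \ref{thm:adj}, \ref{thm:grow}, \ref{thm:esd}, and the only analytic step is the $1/\eta$ scaling of $\|P_\eta'\|_{L^1}$, which is what determines the lower bound on $\Im z_n$ in the hypotheses. The single point worth checking carefully is the legitimacy of integration by parts and of exchanging expectation/summation with the $\lambda$-integral — both reduce to verifying vanishing of the CDF differences at infinity (automatic for probability measures) and non-negativity of the integrand, so nothing deeper than Fubini is needed. I do not foresee a genuine obstacle; the main technical task is bookkeeping of the numerical constants to match the lower bounds on $\Im z_n$ prescribed by the corollary.
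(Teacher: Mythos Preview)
Your proposal is correct and follows essentially the same route as the paper: write $\Im\Gamma$ as an integral of the Poisson kernel against the spectral measure, integrate by parts to replace the measure by its cumulative distribution function, use $\int_\R |P_\eta'(u)|\,du = 2/\eta$, and then invoke Theorems~\ref{thm:adj}, \ref{thm:grow}, \ref{thm:esd} respectively. The constant bookkeeping you describe (the factor $2$ linking $8\pi$ to $16\pi$, etc.) matches the paper's computation exactly.
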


\begin{proof}
Let us show how the first statement can be deduced from Theorem \ref{thm:adj}. The other two statements follow in the same way from Theorem \ref{thm:grow} and Theorem \ref{thm:esd} respectively.  

For $z \in \C_+$ write $z = E+i\eta$, with $E \in \R$ and $\eta > 0$, and 
\begin{equation}
\label{eq:gfunc}
g_{E,\eta}(\lambda) = \Im (\lambda-z)^{-1} = \frac \eta{(E-\lambda)^2 + \eta^2} \, .
\end{equation}
We note that
\begin{equation}
\label{poissonest}
\int_\R \left| \frac{\partial g_{E,\eta}}{\partial \lambda}(\lambda)  \right| d\lambda = \frac 2 \eta \, .
\end{equation}
By \eqref{greenrep} we have
$$
\Im \Gamma_n(x,z_n) = \int_\R g_{E_n,\eta_n}(\lambda) \mu_{n,x}(d\lambda) =  - \int_\R \mu_{n,x}\lk (-\infty,\lambda] \rk dg_{E_n,\eta_n}(\lambda)
$$
and by \eqref{treegreenrep}
$$
\Im \Gamma_{\mathcal T_d}(z_n) = - \int_\R \sigma_0\lk (-\infty,\lambda] \rk dg_{E_n, \eta_n}(\lambda) \, .
$$
Combining these identities with Theorem \ref{thm:adj} and with \eqref{poissonest} yields the claim.
\end{proof}

\begin{rem}
The third statement of the corollary gives a partial answer to a question raised in \cite{Fro11}, whether 
$$
\E \left| \Gamma_n (0, E + i \eta_n;V) - \Gamma_{\mathcal T_d} (0, E + i \eta_n;V) \right|  \to 0
$$
for a sequence of positive numbers $\eta_n$ that is of order $o(1)$ as $n \to \infty$. At least for the imaginary part of the Green function, the third statement of Corollary \ref{cor:green} can be interpreted in this way, if the vertex $0$ is chosen at random with uniform probability from $G_n$.
\end{rem}


\subsection{Delocalization of eigenvectors}

In this subsection we analyze how the behavior of eigenvectors of the finite-volume operator $H_n(V)$ is related to spectral properties of the infinite-volume operator $H_{\mathcal T_d}(V)$.
In the regime of absolutely continuous spectrum of the infinite-volume operator the corresponding generalized eigenfunctions are delocalized: They are not square-summable and in particular not localized to a bounded set (see for example \cite{FigNeb91} for the behavior of eigenfunctions of the adjacency matrix on an infinite regular tree and \cite{Kle98,AizWar13} for random Schr\"odinger operators).

In finite volume the spectrum is of course always pure point and eigenvectors are square-summable. In the following we show that one can nevertheless find remnants of delocalization for eigenvectors of the finite volume operator. 

Let us first consider the rescaled adjacency matrix $\tilde A_n$ on a random regular graph with degree $d_n$ tending to infinity as $n \to \infty$. In this case the spectral measure in the limit of infinite volume $n \to \infty$ is given by the semicircle measure $\sigma_\s$ defined in \eqref{sc}.  This measure is purely absolutely continuous with bounded density. Thus eigenvectors of $\tilde A_n$ are expected to be delocalized for large $n$. Proposition \ref{pro:growdeloc} shows that this is justified. The result is similar to \cite[Theorem 3]{DumPal12} and the remark after Theorem \ref{thm:grow} applies. We include the statement and its proof because it serves as an illustration for the more involved result about eigenvectors of random Schr\"odinger operators that is given in Theorem \ref{thm:deloc} below.

The proof relies on the fact that the Stieltjes transform of an absolutely continuous measure has a uniformly bounded imaginary part. In this way absolute continuous spectrum is related to boundedness of the imaginary part of the Green function:  Since the semicircle distribution  is absolutely continuous with bounded density we find, for all $E \in \R$ and $\eta > 0$,
\begin{equation}
\label{scupper}
\Im  \Gamma_\s(E + i\eta) = \int_\R g_{E,\eta}(\lambda) \sigma_{\s}(d\lambda) \leq \frac 2\pi \int_\R g_{E,\eta}(\lambda)  d\lambda = 2\, .
\end{equation}
This uniform estimate is a crucial ingredient in the proof of the following proposition.

\begin{proposition}
\label{pro:growdeloc}
Let $d_n$ satisfy the conditions of Theorem \ref{thm:grow} and let $\Lambda_n$ be a deterministic subset from $n$ vertices with $|\Lambda_n| \leq \ln(n)$. 

Let $G_n \in \mathcal G_{n,d_n}$ be a random regular graph of degree $d_n$. Then for any $\ell^2(G_n)$-normalized eigenvector $\phi$ of the operator $\tilde A_n$ the estimate   
$$
\sum_{x \in \Lambda_n} |\phi(x)|^2 \leq C  \lk  \frac{\ln(d_n-1)}{\ln(n)} + \frac 1{d_n} \rk |\Lambda_n|
$$
holds asymptotically almost surely with a uniform constant $C>0$.
\end{proposition}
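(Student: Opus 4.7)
The plan is to control $\sum_{x\in\Lambda_n}|\phi(x)|^2$ via a Green-function estimate at a spectral parameter just above the eigenvalue of $\phi$. For any normalized eigenpair $(\tilde\lambda,\phi)$ of $\tilde A_n$ and any $\eta>0$, the spectral resolution yields
$$
\Im \tilde\Gamma_n(x,\tilde\lambda+i\eta) \;=\; \sum_j \frac{\eta\,|\varphi_j(x)|^2}{(\tilde\lambda_j-\tilde\lambda)^2+\eta^2} \;\geq\; \frac{|\phi(x)|^2}{\eta}\,,
$$
so the proof reduces to a uniform (in the random parameter $\tilde\lambda$) upper bound on $\eta\sum_{x\in\Lambda_n}\Im\tilde\Gamma_n(x,\tilde\lambda+i\eta)$. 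I split $\Lambda_n=(\Lambda_n\cap F_n(k))\cup(\Lambda_n\setminus F_n(k))$ with $F_n(k)$ as in Lemma \ref{lem:goodset} and estimate each piece separately.

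For $x\in F_n(k)$ one has $R(x)\geq k$. Integration by parts on the Stieltjes representation of $\Im\tilde\Gamma_n$, using the Poisson kernel $g_{E,\eta}$ of \eqref{eq:gfunc} together with $\int_\R|\partial_\lambda g_{E,\eta}(\lambda)|\,d\lambda=2/\eta$, converts the deterministic bound \eqref{basicgrowest} and the tree-to-semicircle bound \eqref{scestimate} into
$$
|\Im\tilde\Gamma_n(x,z)-\Im\Gamma_\s(z)| \;\leq\; \frac{4\pi\tilde\gamma_{d_n}}{\eta(k-1)} + \frac{C}{\eta d_n}
$$
uniformly in $z=E+i\eta$. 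Combined with $\Im\Gamma_\s(z)\leq 2$ from \eqref{scupper} this gives
$$
\eta\sum_{x\in\Lambda_n\cap F_n(k)}\Im\tilde\Gamma_n(x,z)\;\leq\; |\Lambda_n|\left(2\eta+\frac{C}{d_n}+\frac{C}{k-1}\right).
$$
For $x\in\Lambda_n\setminus F_n(k)$ I simply use $|\phi(x)|^2\leq\|\phi\|_{\ell^2}^2=1$, whose contribution is $|\Lambda_n\setminus F_n(k)|$.

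The main obstacle is bounding $|\Lambda_n\setminus F_n(k)|$ asymptotically almost surely. Since $\Lambda_n$ is deterministic and the law of $G_n$ is invariant under permutations of its vertex set, exchangeability gives
$$
\mathcal E_{n,d_n}\bigl[|\Lambda_n\setminus F_n(k)|\bigr] \;=\; \frac{|\Lambda_n|}{n}\,\mathcal E_{n,d_n}\bigl[|G_n\setminus F_n(k)|\bigr]\,,
$$
and the argument in the proof of Lemma \ref{lem:goodset} (via Lemma \ref{lem:cycles}) bounds the second factor by $C(d_n-1)^{2k}$. Choosing $k=\lfloor\kappa\ln(n)/\ln(d_n-1)\rfloor+1$ with fixed $\kappa<1/3$ gives $(d_n-1)^{2k}\leq C n^{2\kappa}$; since the constraint $d_n\leq (n/\ln n)^{1/3}$ forces $\delta_n:=\ln(d_n-1)/\ln(n)+1/d_n\geq (\ln n/n)^{1/3}$, Markov's inequality yields $|\Lambda_n\setminus F_n(k)|\leq\delta_n|\Lambda_n|$ asymptotically almost surely.

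Finally, taking $\eta=\delta_n$ makes each of the four error terms of order $\delta_n|\Lambda_n|$, which produces the claimed bound with a uniform constant. The delicate part of the argument is the joint tuning of $k$ and $\eta$: $\eta$ must be small enough that $2\eta$ is absorbed in $\delta_n$, yet large enough that multiplying by $\eta$ kills the deterministic error $C/[\eta(k-1)]$; simultaneously $k$ must be chosen so that the exchangeability/Markov estimate for the random bad set still succeeds under the constraint $d_n\leq(n/\ln n)^{1/3}$. No Wegner-type input is needed here, because $\sigma_\s$ has an explicitly bounded density.
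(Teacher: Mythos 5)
Your overall strategy matches the paper's for the main analytic step: lower-bound $|\phi(x)|^2$ by $\eta\,\Im\tilde\Gamma_n(x,\tilde\lambda+i\eta)$, integrate by parts against the Poisson kernel to compare with $\Im\Gamma_\s$, and invoke $\Im\Gamma_\s\le 2$. Where you genuinely diverge is in the treatment of the bad set: the paper works on the event $\{\Lambda_n\subset F_n(k)\}$ and estimates its probability with a hypergeometric ratio, whereas you split $\Lambda_n$, use the trivial bound $|\phi(x)|^2\le 1$ on $\Lambda_n\setminus F_n(k)$, and control $|\Lambda_n\setminus F_n(k)|$ by exchangeability plus Markov. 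That alternative is sound in principle and arguably more transparent.

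However, the quantitative closing of the Markov step is broken as written. With $k=\lfloor\kappa\ln n/\ln(d_n-1)\rfloor+1$ one has
\[
(d_n-1)^{2k}\le (d_n-1)^{2}\,n^{2\kappa},
\]
not $Cn^{2\kappa}$: the factor $(d_n-1)^2$ does not go away when $d_n\to\infty$, and it is precisely this factor that the constraint $d_n\le(n/\ln n)^{1/3}$ is there to control. If you then feed this into Markov using only your lower bound $\delta_n\ge(\ln n/n)^{1/3}$, you get
\[
\mathcal P_{n,d_n}\bigl[|\Lambda_n\setminus F_n(k)|>\delta_n|\Lambda_n|\bigr]
\;\lesssim\;\frac{(d_n-1)^2\,n^{2\kappa}}{n\,\delta_n}
\;\lesssim\;\frac{(n/\ln n)^{2/3}\,n^{2\kappa}\,(n/\ln n)^{1/3}}{n}
\;=\;\frac{n^{2\kappa}}{\ln n},
\]
which tends to infinity for every $\kappa>0$. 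The error probability does not vanish, so the bad-set estimate does not hold asymptotically almost surely as claimed.

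The fix is to use the lower bound $\delta_n\ge\ln(d_n-1)/\ln n$, which is large precisely when $d_n$ (and hence $(d_n-1)^2$) is large. Then the Markov ratio becomes $(d_n-1)^2 n^{2\kappa}\ln n/(n\ln(d_n-1))$, which does tend to zero under $d_n\le(n/\ln n)^{1/3}$ provided $\kappa$ is chosen sufficiently small (roughly $\kappa<1/6$, not $\kappa<1/3$). Since the proposition allows an unspecified constant $C$, shrinking $\kappa$ only worsens the constant and does not affect the conclusion. With those two corrections — the missing $(d_n-1)^2$ factor and the sharper lower bound on $\delta_n$ with a correspondingly smaller $\kappa$ — your argument goes through.
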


The same estimate holds asymptotically almost surely if one first chooses $G_n \in \mathcal G_{n,d}$ at random and then the subset $\Lambda_n \subset G_n$ at random with uniform probability.

\begin{proof}[Proof of Proposition \ref{pro:growdeloc}]
Let $(\varphi_j)_{j=1}^n$ and $(\lambda_j)_{j=1}^n$ denote the eigenvectors and eigenvalues of the operator $\tilde A_n$. Then for any $m \in \{ 1, \dots, n\}$ and $\eta > 0$ we estimate
\begin{equation}
\label{growdeloclower}
|\varphi_m(x)|^2 \leq \eta\sum_{j = 1}^n \frac{\eta}{(\lambda_m-\lambda_j)^2 + \eta^2} | \varphi_j(x)|^2 = \eta  \Im \tilde \Gamma_n(x,\lambda_m + i\eta) \, ,
\end{equation}
where in the last step we used \eqref{greenrep}.
To derive an upper bound on $ \Im \tilde  \Gamma_n(x,\lambda_m+i\eta)$ we compare  with the Stieltjes transform of the semicircle distribution:
\begin{align*}
\left| \Im \tilde \Gamma_n(x,\lambda_m + i \eta) - \Im \Gamma_\s (\lambda_m + i \eta) \right| 
& = \left| \int_\R g_{\lambda_m,\eta} (\lambda) d \tilde \mu_{n,x}(\lambda) - \int_\R g_{\lambda_m,\eta} (\lambda) d\sigma_\s(\lambda) \right| \\
& = \left| \int_\R \lk \tilde \mu_{n,x}((-\infty,\lambda])  -  \sigma_\s((-\infty,\lambda])  \rk dg_{\lambda_m,\eta}(\lambda)  \right| \\
& \leq \int_\R \left| \tilde \mu_{n,x}((-\infty,\lambda])  -  \sigma_\s((-\infty,\lambda]) \right|  \left| g'_{\lambda_m,\eta}(\lambda) \right|  d\lambda \, .
\end{align*}
Inserting estimates \eqref{basicgrowest}, \eqref{scestimate}, and \eqref{poissonest} yields
\begin{align*}
\left| \Im \tilde \Gamma_n(x,\lambda_m + i \eta) - \Im \Gamma_\s(\lambda_m + i \eta) \right| &\leq C \lk \frac 1{R(x)^*} + \frac 1{d_n} \rk  \int_\R  \left| g'_{\lambda_m,\eta}(\lambda) \right| d\lambda \\
&\leq \frac{2C}{\eta} \lk \frac 1{R(x)^*} + \frac 1{d_n} \rk  
\end{align*}
for all $x \in G_n$ and  $\eta > 0$ with a constant $C > 0$ independent of $x$ and $n$.
We combine this bound with \eqref{scupper} and obtain
\begin{equation}
\label{eq:imgammaest}
\Im \tilde \Gamma_n(x,\lambda_m + i \eta) \leq  2 +  \frac{2C}{\eta} \lk \frac 1{R(x)^*} + \frac 1{d_n} \rk   \, .
\end{equation}

Recall the definition of $F_n(k) \subset G_n$ from Lemma \ref{lem:goodset} and the fact that $R(x) \geq k$ for $x \in F_n(k)$. Let us assume that $\Lambda_n \subset F_n(k)$. Under this assumption we combine \eqref{eq:imgammaest} with \eqref{growdeloclower} and take the limit $\eta \da 0$.  This yields
\begin{equation}
\label{growdelocfinal}
\sum_{x \in \Lambda_n}  |\varphi_m(x)|^2 \leq 2C \lk \frac 1{k-1} + \frac 1{d_n} \rk  |\Lambda_n| 
\end{equation}
for any eigenfunction $\varphi_m$, $m \in \{ 1,\dots,n \}$, on the event $\{ \Lambda_n \subset F_n(k) \}$. 

Now we estimate the probability of this event with the help of Lemma \ref{lem:goodset}. We remark that, for $|\Lambda_n| < |F_n(k)|$,
$$
\mathcal P_{n,d} \left[ \Lambda_n \subset F_n(k) \right] =  {{n-|\Lambda_n|} \choose {|F_n(k)| - |\Lambda_n|}}  {n \choose {|F_n(k)|}} ^{-1} = \frac{(n-|\Lambda_n|)! \, |F_n(k)|!}{(|F_n(k)|- |\Lambda_n|)! \, n!} \, .
$$
For a parameter $0 < \tau_n < 1- |\Lambda_n|/n$ we introduce the event $\Omega(\tau_n,k)  = \{|F_n(k)| > n (1-\tau_n) \}$ and estimate 
$$
\mathcal P_{n,d} \left[ \Lambda_n \subset F_n(k) \right] \geq \mathcal P_{n,d} \left[ \Lambda_n \subset F_n(k) \, | \, \Omega(\tau_n,k) \right] \mathcal P_{n,d} \left[ \Omega(\tau_n,k) \right] \, .
$$ 
The first factor is bounded below by
$$
\frac{(n-|\Lambda_n|)! \, (n(1-\tau_n))!}{ (n(1-\tau_n)- |\Lambda_n|)! \, n!} \geq \lk \frac{n(1-\tau_n) - |\Lambda_n|}{n} \rk^{|\Lambda_n|} = \lk 1 - \tau_n - \frac{|\Lambda_n|}n \rk^{|\Lambda_n|}
$$
and from Lemma \ref{lem:goodset} we obtain that the second factor is bounded below by
$$
 1 - \frac 1{2n \tau_n} (d-1)^{2k} \frac{\sqrt{d-1}}{\sqrt{d-1}-1} \lk 1 + \frac 8n \lk d + \frac{k}{d} \rk \rk^{2k}  \, .
$$

Now  we choose $\tau_n$ comparable to $1/\sqrt n$ and $k = \kappa \log_{d_n-1}(n) + 1$ with $\kappa < 1/4$. Then both lower bounds tend to $1$ as $n \to \infty$ and we get
$$
\mathcal P_{n,d} \left[ \Lambda_n \subset F_n \lk \kappa \log_{d_n-1}(n) + 1 \rk \right] = 1 - o(1)
$$
as $n \to \infty$. Inserting this choice of $k$ in the bound \eqref{growdelocfinal} proves the claim.
\end{proof}

\begin{rem}
The same methods yield a similar statement for the adjacency matrix $A_n$ on a random regular graph with fixed degree. However, for this operator better results have recently been derived in  \cite{BroLin13}: Brooks and Lindenstrauss also investigate delocalization on regular graphs by comparing with the regular tree. They use the explicit representation of spherical generalized eigenfunctions on the tree and estimate the norms of certain propagation operators (also build from orthogonal polynomials). From these estimates they deduce information about eigenvectors on regular graphs. This direct comparison of eigenvectors leads to delocalization bounds that decay not   logarithmically but with a small power of $n$.

Results about the behavior of eigenvectors on regular graphs related to quantum ergodicity have also been obtained in \cite{AnaLem13}, where delocalization is tested by averaging an observable. Both results rely on explicit formulas for eigenfunctions on trees that are not available for random Schr\"odinger operators. 
\end{rem}

Let us now study eigenvectors of random Schr\"odinger operators on random regular graphs with fixed degree $d$. In this case the analysis is more complicated since the spectral measure of the corresponding infinite volume operator is not purely absolutely continuous: The spectrum of a random Schr\"odinger operator on an infinite tree can consist of different components, including absolutely continuous spectrum but also pure-point spectrum. We refer to \cite{War12} for an overview of spectral properties of the operator $H_{\mathcal T_d}(V)$,  see also Appendix~\ref{ap:wegner}, where we state selected results.

Existence of pure-point spectrum and exponential localization of the corresponding eigenfunctions of $H_{\mathcal T_d}(V)$ was proved in \cite{Aiz94}. Therefore one cannot expect that all eigenvectors of the finite volume operator $H_n(V)$ on a random regular graph are delocalized.
The existence of absolutely continuous spectrum of $H_{\mathcal T_d}(V)$ was also established, first in \cite{Kle98} and later in \cite{AizSimWar06b, FroHasSpi07} and the regime where absolutely continuous spectrum can be found was recently extended in \cite{AizWar13}. 
A relevant criterion for absolutely continuous spectrum is positivity of the imaginary part of the Green function. Thus one defines
$$
\sigma_\ac(H_{\mathcal T_d}) = \left\{ \lambda \in \R \, : \, \p \left[ \lim_{\eta \da 0} \Im \Gamma_{\mathcal T_d}(x,\lambda + i \eta;V) > 0 \right] > 0 \right\} \, .
$$
This set is independent of $x \in \mathcal T_d$, deterministic and forms the support of the absolutely continuous component of the spectrum (almost surely with respect to the random potential), see \cite{AizWar13, AizWar12}. In Theorem \ref{thm:deloc} we show that eigenvectors of the finite volume operator $H_n(V)$ with   eigenvalues  in $\sigma_\ac(H_{\mathcal T_d})$ are typically delocalized for large $n$.

An important ingredient in the proof of delocalization for the adjacency matrix alone is the fact that the limiting spectral measure is absolutely continuous with uniformly boundedness of the density. This allows for  estimate \eqref{scupper}. For random Schr\"odinger operators  the density of the limiting spectral measure $\mu_{\mathcal T_d,x}$ is given by
$\lim_{\eta \da 0} \Im \Gamma_{\mathcal T_d}(x,\lambda+i\eta;V)$. This limit exists almost everywhere and it is finite if $\lambda$ lies within the absolutely continuous spectrum \cite{Tes09,AizWar13}. However, even for compact intervals $I \subset \sigma_\ac(H_{\mathcal T_d})$ it is not clear whether $\sup_{\lambda \in I, \eta > 0}  \Im \Gamma_{\mathcal T_d}(x,\lambda+i\eta;V)$ has finite expectation. Therefore we can not treat single eigenvectors but we have to select a suitable combination as follows.

For a given realization of the random potential $V$ on the tree $\mathcal T_d$, a graph $G_n \in \mathcal G_{n,d}$, and a vertex $x_0 \in G_n$ we identify the potential on the graph with the potential on the tree as described in Section \ref{ssec:rs}. Let $(\lambda_j)_{j = 1}^n$ and $(\varphi_j)_{j=1}^n$ denote the eigenvalues and corresponding $\ell^2(G_n)$-normalized eigenvectors of the operator $H_n(V)$ and let $I \subset \R$ be bounded and measurable.
For $j = 1, 2, \dots, n$ we define non-negative coefficients
\begin{equation}
\label{eq:cdef}
c_j(x_0,I) = |\varphi_j(x_0)|^2 \ \textnormal{if} \ \lambda_j \in I \quad \textnormal{and} \quad c_j(x_0,I) = 0  \ \textnormal{if} \ \lambda_j \notin I 
\end{equation}
and we note that $\sum_{j=1}^n c_j(x_0,I) \leq 1$.

In Lemma \ref{lem:treegreen} below we prove the following estimate under the assumption $I \subset \sigma_\ac(H_{\mathcal T_d})$:  For all $\hat x \in \mathcal T_d$ and $\eta > 0$ we have
\begin{equation}
\label{eq:greenest}
\E \left[  \sum_{j=1}^n c_j(x_0,I)    \Im \Gamma_{\mathcal T_d}(\hat x,\lambda_j + i \eta;V) \right] \leq C  |I|
\end{equation}
with a constant $C > 0$ depending only on the degree $d$ and on the density of the random potential. In the proof of Theorem~\ref{thm:deloc} we use this estimate  in the same way as we used \eqref{scupper} in the proof of Proposition \ref{pro:growdeloc}.

We remark  that  an eigenvector  $\varphi_j$ of $H_n(V)$ that is localized close to the vertex $x_0$ leads to a coefficient $c_j$ of order $1$. The following result shows that this can not happen for large $n$ for eigenvectors with eigenvalues within the absolutely continuous spectrum of $H_{\mathcal T_d}$. 
Recall that we write $B_r(x_0) = \{ y \in G_n \, : \, d(x_0,y) \leq r \}$ and that $|B_r(x_0)|$ denotes the number of vertices in this neighborhood.

\begin{theorem}
\label{thm:deloc}
Assume that the density of the random potential is bounded and has bounded support. Let $I \subset \sigma_\ac(H_{\mathcal T_d})$ be bounded and measurable.
For a random regular graph $G_n \in \mathcal G_{n,d}$ choose a vertex $x_0 \in G_n$ at random with uniform probability. Let $(\varphi_j)_{j=1}^n$ denote the $\ell^2(G_n)$-normalized eigenvectors of $H_n(V)$ and let $c_j(x_0,I)$ be as defined in \eqref{eq:cdef}.  

Then for any sequence $(\varepsilon_n)_{n \in \N}$ of positive numbers with  $\varepsilon_n = o(1)$ as $n \to \infty$ the estimate
$$
\sum_{x \in B_r(x_0)} \sum_{j=1}^n c_j(x_0,I) |\varphi_j(x)|^2 \leq \frac{|B_r(x_0)|}{\varepsilon_n \sqrt{\log_{d-1}(n)}}
$$
holds asymptotically almost surely  (with respect to the random potential $V$ and the choice of graph $G_n \in \mathcal G_{n,d}$ and vertex $x_0 \in G_n$) for all $r \leq \ln \lk  \ln (n) \rk$.
\end{theorem}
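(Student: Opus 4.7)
The strategy adapts the argument of Proposition~\ref{pro:growdeloc} to the random Schr\"odinger setting. For each $\eta > 0$, the spectral inequality $|\varphi_j(x)|^2 \leq \eta \, \Im \Gamma_n(x, \lambda_j + i\eta; V)$, weighted by $c_j(x_0, I)$ and summed over $j$ and over $x \in B_r(x_0)$, combined with the splitting $\Im \Gamma_n = \Im \Gamma_{\mathcal T_d} + \mathrm{diff}$ with $\mathrm{diff}(x,\lambda) := \Im \Gamma_n(x,\lambda+i\eta;V) - \Im \Gamma_{\mathcal T_d}(\hat x,\lambda+i\eta;V)$, reduces the problem to controlling
\[
S_{\mathcal T_d} = \sum_{x \in B_r(x_0)} \sum_{j=1}^n c_j(x_0, I)\, \Im \Gamma_{\mathcal T_d}(\hat x, \lambda_j + i\eta; V), \qquad S_{\mathrm{diff}} = \sum_{x \in B_r(x_0)} \Bigl| \sum_{j=1}^n c_j(x_0, I)\, \mathrm{diff}(x, \lambda_j) \Bigr|.
\]

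For $S_{\mathcal T_d}$ the relevant input is the bound \eqref{eq:greenest} from Lemma~\ref{lem:treegreen}, which yields $\E S_{\mathcal T_d} \leq |B_r(x_0)|\, C|I|$ after summing vertex by vertex. This estimate plays the role in the present context of the uniform bound \eqref{scupper} that was exploited in the proof of Proposition~\ref{pro:growdeloc}.

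The delicate step is the control of $S_{\mathrm{diff}}$. Introduce $F(s) := \sum_{j} c_j(x_0, I)\, g_{s,\eta}(\lambda_j)$; the symmetry $g_{\lambda,\eta}(s) = g_{s,\eta}(\lambda)$ gives $\sum_j c_j \Im \Gamma_n(x, \lambda_j + i\eta; V) = \int_\R F(s)\, d\mu_{n,x}(s; V)$ and analogously for the tree. Integration by parts then yields
\[
\sum_{j} c_j(x_0, I)\, \mathrm{diff}(x, \lambda_j) \;=\; -\int_{\R} F'(s)\, \Delta(x, s; V)\, ds,
\]
where $\Delta(x, s; V) = \mu_{n,x}((-\infty, s]; V) - \mu_{\mathcal T_d, \hat x}((-\infty, s]; V)$. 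The decisive observation is the uniform pointwise bound $|F'(s)| \leq F(s)/\eta \leq 1/\eta^2$, which follows from the term-by-term estimate $2\eta |s - \lambda_j| \leq (s-\lambda_j)^2 + \eta^2$, the domination $c_j(x_0,I) \leq |\varphi_j(x_0)|^2$, and the Herglotz bound $|\Gamma_n(x_0, s+i\eta; V)| \leq 1/\eta$. Consequently
\[
\Bigl| \sum_{j} c_j(x_0,I)\, \mathrm{diff}(x, \lambda_j) \Bigr| \;\leq\; \frac{1}{\eta^2} \int_\R |\Delta(x, s; V)|\, ds,
\]
and since $\Delta(x, \cdot; V)$ is supported in a fixed bounded interval of length $L$, Fubini combined with Theorem~\ref{thm:detest} provides $\E \int |\Delta(x, s; V)|\, ds \leq L C'/R(x)^*$ with $C' = 2\pi \|\rho\|_\infty (2\sqrt{d-1} + \rho_0)$.

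Working on the event $\Omega = \{x_0 \in F_n(k + \lfloor \ln \ln n \rfloor)\}$ with $k = \lfloor \kappa \log_{d-1}(n) \rfloor$ and $\kappa < 1/2$ guarantees $R(x)^* \geq k - 1$ for every $x \in B_r(x_0)$ and every $r \leq \ln \ln n$, by the inclusion $B_k(x) \subset B_{k + \lfloor \ln \ln n \rfloor}(x_0)$. Combining the two parts gives
\[
\E \biggl[ \sum_{x \in B_r(x_0)} \sum_{j=1}^n c_j(x_0,I)\, |\varphi_j(x)|^2 \biggr] \;\leq\; |B_r(x_0)| \Bigl[ \eta C|I| + \frac{L C'}{\eta(k-1)} \Bigr].
\]
Choosing $\eta$ of order $(k|I|)^{-1/2}$ yields expectation of order $|B_r(x_0)|/\sqrt{\log_{d-1}(n)}$, which by Markov's inequality translates into the claimed a.a.s.\ estimate with factor $\varepsilon_n^{-1}$ for any $\varepsilon_n = o(1)$. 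Lemma~\ref{lem:goodset}, applied at the scale $k + \lfloor \ln \ln n \rfloor = O(\log_{d-1}(n))$ with $\kappa < 1/2$, ensures that $\Omega$ itself holds a.a.s., jointly in the graph and in the uniform choice of $x_0$. The principal obstacle is the control of $S_{\mathrm{diff}}$: since Theorem~\ref{thm:detest} only bounds $\sup_s \E|\Delta(x, s; V)|$ and not $\E \sup_s$, the naive estimate $\sum_j c_j |\mathrm{diff}(x, \lambda_j)| \leq \sup_\lambda |\mathrm{diff}(x, \lambda)|$ is too lossy. The integration-by-parts device above, together with the uniform bound $F \leq \Im \Gamma_n(x_0, \cdot) \leq 1/\eta$, is precisely what compensates for the absence of an $L^\infty$ bound on the limiting spectral density $\sigma_\rho$ analogous to the one available in the adjacency case.
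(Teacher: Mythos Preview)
Your proposal is correct and follows essentially the same architecture as the paper's proof: the spectral inequality $|\varphi_j(x)|^2\le\eta\,\Im\Gamma_n$, the splitting into tree and difference parts, Lemma~\ref{lem:treegreen} for the tree contribution, Theorem~\ref{thm:detest} plus integration by parts for the difference, optimization in $\eta\sim(\log_{d-1}n)^{-1/2}$, Markov's inequality, and Lemma~\ref{lem:goodset} for the event $\{x_0\in F_n(\cdot)\}$. The only noteworthy variation is in the handling of $S_{\mathrm{diff}}$: the paper bounds each $|\mathrm{diff}(x,\lambda_j)|$ separately via $\int|\Delta(x,\lambda;V)|\sup_{\xi\in I}|g'_{\xi,\eta}(\lambda)|\,d\lambda$ and then uses $\sum_j c_j\le 1$ together with $\int_\R\sup_{\xi\in I}|g'_{\xi,\eta}|\,d\lambda\le C\eta^{-1}(1+|I|/\eta)$, whereas you resum first into $F(s)$ and use the pointwise bound $|F'|\le 1/\eta^2$ followed by the compact support of $\Delta$; both devices yield an error term of order $(\eta R(x)^*)^{-1}$ and are interchangeable here.
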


\begin{proof}
First we fix a graph $G_n$ and a vertex $x_0 \in G_n$ and we choose $x \in B_r(x_0)$. 
For a given realization of the potential $V$ on the tree $\mathcal T_d$ we consider the corresponding potential on the graph $G_n$ as explained in Section~\ref{ssec:rs}.

As in \eqref{growdeloclower}, we have for all $j \in \{1, \dots, n\}$ and all $\eta > 0$
$$
|\varphi_j(x)|^2 \leq \eta \, \Im  \Gamma(x,\lambda_j + i\eta;V) 
$$
and therefore
\begin{equation}
\label{deloclower}
\E  \left[ \sum_{j=1}^n c_j |\varphi_j(x)|^2 \right] \leq \eta \,  \E \left[ \sum_{j=1}^n c_j   \Im \Gamma_n(x,\lambda_j+i\eta;V)  \right]
\end{equation}
for all $\eta > 0$. Here and in the remainder of the proof we write $c_j = c_j(x_0,I)$ for short.

To derive an upper bound on $\Im \Gamma_n(x,\lambda_j+i\eta;V) $ we compare with the Green function on the tree $\mathcal T_d$. Consider the map $\iota_{\hat x_0}$ given in \eqref{eq:iotamap} and let $\hat x \in \mathcal T_d$ be the preimage of $x \in G_n$ under this map. Recall the definition of $g_{E,\eta}$ from \eqref{eq:gfunc}. For $\eta > 0$ and any eigenvalue $\lambda_j \in I$ we have
\begin{align*}
&\left| \Im \Gamma_n(x,\lambda_j + i \eta;V) - \Im \Gamma_{\mathcal T_d}(\hat x,\lambda_j + i \eta;V) \right| \\
& = \left| \int_\R g_{\lambda_j,\eta} (\lambda) \mu_{n,x}(d\lambda;V) - \int_\R g_{\lambda_j,\eta} (\lambda) \mu_{\mathcal T_d,\hat x}(d\lambda;V) \right| \\
& \leq \int_\R \left| \mu_{n,x}((-\infty,\lambda];V)  -  \mu_{\mathcal T_d,\hat x}((-\infty,\lambda];V) \right| \, \sup_{\xi \in I} \left| g'_{\xi,\eta}(\lambda) \right|  d\lambda \, .
\end{align*}
Note that a coefficient $c_j$ is non-zero only if the corresponding eigenvalue $\lambda_j$ lies in $I$ and that the sum of the coefficients $c_j$ is bounded by one. Hence, we find 
\begin{align*}
&\E \left[ \sum_{j=1}^n c_j \left| \Im \Gamma_n(x,\lambda_j + i \eta;V) - \Im \Gamma_{\mathcal T_d}(\hat x,\lambda_j + i \eta;V) \right| \right] \\
& \leq \int_\R \E \left[ \left| \mu_{n,x}((-\infty,\lambda];V)  -  \mu_{\mathcal T_d,\hat x}((-\infty,\lambda];V) \right| \right] \, \sup_{\xi \in I} \left| g'_{\xi,\eta}(\lambda) \right|  d\lambda
\end{align*}
and Theorem \ref{thm:detest} yields the upper bound
\begin{align}
\nonumber
&\E \left[ \sum_{j=1}^n c_j \left| \Im \Gamma_n(x,\lambda_j + i \eta;V) - \Im \Gamma_{\mathcal T_d}(\hat x,\lambda_j + i \eta;V) \right| \right] \\
\label{eq:greencomp}
& \leq 
2\pi \|\rho\|_\infty \lk 2 \sqrt{d-1} + \rho_0 \rk \frac 1{R(x)^*} \int_\R \lk \sup_{\xi \in I} \left| g'_{\xi,\eta}(\lambda) \right| \rk d\lambda   \leq C  \lk 1+ \frac {|I|} \eta \rk \frac 1{\eta R^*(x)} \, ,
\end{align}
where we used the fact that the integral of the supremum is bounded by a  constant times $\eta^{-1}(1+|I|\eta^{-1})$. Here $\rho$ denotes the density of the random potential and we use that $\| \rho \|_\infty < \infty$ and that $\supp \rho = [-\rho_0, \rho_0]$ with $\rho_0 < \infty$. To shorten notation $C$ denotes various positive constants that may depend only on $d$ and $\rho$.

Combining \eqref{deloclower} and \eqref{eq:greencomp} with \eqref{eq:greenest} yields, for $\eta > 0$ small enough,
$$
\E \left[ \sum_{j=1}^n c_j |\varphi_j(x)|^2 \right] \leq C  \lk |I| \eta + \frac 1{R(x)^*} + \frac{|I|}{\eta R(x)^*} \rk \leq C |I| \lk \eta + \frac{1}{\eta R(x)^*} \rk \, .
$$
Applying the Markov inequality we deduce that for any $0 < \delta_n < 1$
\begin{equation}
\label{initialprob}
\p \left[ \sum_{x \in B_r(x_0)}   \sum_{j=1}^n c_j |\varphi_j(x)|^2 \leq \frac{C |I|}{\delta_n} \sum_{x \in B_r(x_0)} \lk \eta + \frac{1}{\eta R(x)^*} \rk  \right] \geq 1- \delta_n \, .
\end{equation}

It remains to estimate $R(x)^*$. 
Recall the definition of  $F_n(k) \subset G_n$ from Lemma~\ref{lem:goodset} and assume that $x_0 \in F_n(k)$.  Since $x_0$ is chosen from $G_n$ at random with uniform probability we can estimate the probability of this event with the help of Lemma~\ref{lem:goodset}. Indeed, we have, for any $0 < \tau_n < 1$,
\begin{align*}
\mathcal P_{n,d} \left[ x_0 \in F_n(k) \right] &\geq \mathcal P_{n,d} \left[ x_0 \in F_n(k) \, | \, |F_n(k)| \geq n(1-\tau_n) \right] \ \mathcal P_{n,d} \left[ |F_n(k)| \geq n(1-\tau_n) \right] \\
&\geq (1-\tau_n)  \mathcal P_{n,d} \left[ |F_n(k)| \geq n(1-\tau_n) \right] \, .
\end{align*}
By Lemma \ref{lem:goodset} the latter probability is bounded below by
$$
1 - \frac 1{2n \tau_n} (d-1)^{2k} \frac{\sqrt{d-1}}{\sqrt{d-1}-1} \lk 1 + \frac 8n \lk d + \frac{k}{d} \rk \rk^{2k} \, .
$$
Let us now choose $k = \kappa \log_{d-1}(n) + 1$ with $\kappa < 1/4$ and $\tau_n$ comparable to $1/\sqrt n$. Then the lower bound tends to $1$ as $n \to \infty$ and we get
\begin{equation}
\label{graphproblower}
\mathcal P_{n,d} \left[ x_0 \in F_n \lk \kappa  \log_{d-1}(n) +1 \rk \right] = 1 - o(1)
\end{equation}
as $n \to \infty$. 

By assumption, $r \leq \ln \lk \ln (n) \rk < \kappa \log_{d-1}(n)+1$ for $n$ large enough. Hence, for $x \in B_r(x_0)$ we have
\[
B_{\kappa \log_{d-1}(n)+1-r}(x) \subset B_{\kappa \log_{d-1}(n)+1}(x_0) \, . 
\]
Thus $x_0 \in F_n(\kappa \log_{d-1}(n)+1)$ implies that these neighborhoods are acyclic. In particular we find $x \in F_n(\kappa \log_{d-1}(n)-r+1)$ so that  $R(x)^* \geq \kappa \log_{d-1}(n)-r$. We choose $\eta = (\kappa \log_{d-1}(n) -r)^{-1/2}$ and arrive at
$$
\sum_{x \in B_r(x_0)} \lk \eta + \frac{1}{\eta R(x)^*} \rk \leq \frac{2 |B_r(x_0)|}{\sqrt {\kappa \log_{d-1}(n)-r}} \leq \frac{4 |B_r(x_0)|}{\sqrt {\kappa \log_{d-1}(n)}} 
$$
for $n$ large enough.
  
With this choice of parameters \eqref{initialprob} reads as
\[
\p \left[ \sum_{x \in B_r(x_0)}   \sum_{j=1}^n c_j |\varphi_j(x)|^2 \leq  \frac{C |I|}{\delta_n} \frac{|B_r(x_0)|}{\sqrt{\kappa \log_{d-1}(n)}}  \right] \geq 1- \delta_n 
\]
and this estimate is valid for $n$ large enough on the event $\{x_0 \in F_n(\kappa \log_{d-1}(n) +1) \}$.
Finally, we choose $\delta_n$ comparable to $C |I| \varepsilon_n / \sqrt \kappa$ such that 
\begin{equation}
\label{finaldelocest}
\p \left[ \sum_{x \in B_r(x_0)}   \sum_{j=1}^n c_j |\varphi_j(x)|^2 \leq  \frac{|B_r(x_0)|}{\varepsilon_n \sqrt{  \log_{d-1}(n)}}  \right] \geq 1- \frac{C|I|}{\sqrt \kappa} \, \varepsilon_n 
\end{equation}
for $n$ large enough on the event  $\{x_0 \in F_n(\kappa \log_{d-1}(n) +1) \}$.
Relations \eqref{graphproblower} and \eqref{finaldelocest} show that the claimed estimate holds asymptotically almost surely and the proof is complete.
\end{proof}

The proof of Theorem \ref{thm:deloc} relies on the following estimate for the Green function on the tree. For this estimate it is essential that $I \subset \sigma_{\ac}(H_{\mathcal T_d})$.

\begin{lemma}
\label{lem:treegreen}
Under the conditions of Theorem \ref{thm:deloc} the bound \eqref{eq:greenest} holds for all $\hat x \in \mathcal T_d$ and $\eta > 0$ with a constant $C > 0$ depending only on the degree $d$ and on the density of the random potential.
\end{lemma}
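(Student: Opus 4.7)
My plan hinges on two ingredients: a spectral-representation identity that turns the sum into a double integral, and a joint spectral-averaging estimate that decouples the two operators. Expanding $\Im \Gamma_{\mathcal T_d}(\hat x,\lambda+i\eta;V) = \int_\R \frac{\eta}{(\lambda-\nu)^2+\eta^2}\,\mu_{\mathcal T_d,\hat x}(d\nu;V)$ and using $\mu_{n,x_0}(d\lambda;V) = \sum_j |\varphi_j(x_0)|^2 \delta_{\lambda_j}(d\lambda)$ together with Fubini (valid since the integrand is non-negative), the quantity to be bounded becomes
\[
\E \sum_{j=1}^n c_j(x_0,I)\,\Im \Gamma_{\mathcal T_d}(\hat x,\lambda_j+i\eta;V)
= \E \int_I \int_\R \frac{\eta}{(\lambda-\nu)^2+\eta^2}\,\mu_{\mathcal T_d,\hat x}(d\nu;V)\,\mu_{n,x_0}(d\lambda;V).
\]

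I would then exploit the rank-one structure of the potential by conditioning on all tree randomness except $\omega_{\hat x}$. Two subcases arise depending on whether $\hat x$ is among the preimages of $G_n$-vertices that the construction of Section~\ref{ssec:rs} uses to define the graph potential. In the easy subcase (where $\hat x$ is not a representative preimage), $\mu_{n,x_0}$ does not depend on $\omega_{\hat x}$, so Simon-Wolff spectral averaging of $\mu_{\mathcal T_d,\hat x}$ against $\omega_{\hat x}$ yields $\E_{\omega_{\hat x}}[\mu_{\mathcal T_d,\hat x}(B;V)] \leq \|\rho\|_\infty |B|$ uniformly in the other potentials. Combining this with the Wegner-type estimate for the graph spectral measure $\E[\mu_{n,x_0}(I;V)] \leq \|\rho\|_\infty |I|$ (which follows from a parallel rank-one argument at $x_0$, as in Appendix~\ref{ap:wegner}), together with $\int_\R \tfrac{\eta}{(\lambda-\nu)^2+\eta^2}\,d\nu = \pi$, gives the claimed bound with $C = \pi \|\rho\|_\infty^2$ in this case.

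The main obstacle is the other subcase, which is in fact the relevant one for Theorem~\ref{thm:deloc} (where $\hat x$ is chosen inside $B_{R(x_0)}(\hat x_0)$ and so lies among the preimages used in the construction). Now $\omega_{\hat x}$ simultaneously enters the graph operator $H_n(V)$ as the potential value at $y = \iota(\hat x)\in G_n$, and the two spectral measures become coupled through this single random variable. My approach is to make the rank-one resolvent expansion explicit in $\omega_{\hat x}$ for both Green functions, integrate over $\omega_{\hat x}$ against $\rho$, and then apply a Cauchy-Schwarz splitting to reduce the joint $\omega_{\hat x}$-average to a product of two one-site spectral averages, one on $\mathcal T_d$ and one on $G_n$. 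The hypothesis $I \subset \sigma_\ac(H_{\mathcal T_d})$ enters precisely at this step: via the results of \cite{AizWar12}, it guarantees that $\Im \Gamma_{\mathcal T_d}(\hat x,\lambda+i\eta;V)$ satisfies a uniform (in $\eta$) $L^2$-type moment bound on $I$, so that the constant produced by the Cauchy-Schwarz step is finite and depends only on $d$ and on the density $\rho$.
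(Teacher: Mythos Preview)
Your spectral-representation identity is fine and indeed matches the paper's starting point (after a Stieltjes-inversion rewriting). The gap is in the ``hard subcase'', which, as you correctly observe, is the only one relevant to Theorem~\ref{thm:deloc}. There your proposed Cauchy--Schwarz step hinges on an $L^2$-type positive moment bound for $\Im\Gamma_{\mathcal T_d}(\hat x,\lambda+i\eta;V)$ uniformly in $\eta$, which you attribute to \cite{AizWar12}. That reference does not provide this: what it yields (and what the paper records as \eqref{invmoments}) is a bound on \emph{inverse} moments $\E\bigl[(\Im\Gamma_{T}(0,\lambda+i\eta;V))^{-3-\delta}\bigr]$ on $\sigma_{\ac}$, not on positive second moments. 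Without a uniform positive $L^2$ bound your splitting does not close, and no such bound is available in the setting of the paper.

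The paper avoids this obstruction by two moves you are missing. First, it performs the rank-one averaging with respect to $\omega_{\hat x_0}=\omega_{x_0}$ (the potential at the vertex where the graph spectral measure is taken), not with respect to $\omega_{\hat x}$. This gives a clean Poisson-kernel integral on the graph side. Second, to decouple the tree side from $\omega_{\hat x_0}$, it uses the recursion at $\hat x$ to bound
\[
\Im\Gamma_{\mathcal T_d}(\hat x,z;V)\ \le\ \Bigl(\sum_{y\in N^+(\hat x)}\Im\Gamma_{T_y}(y,z;V)\Bigr)^{-1},
\]
where $N^+(\hat x)$ omits the single branch containing $\hat x_0$; the right-hand side is then independent of $\omega_{\hat x_0}$. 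Because the tree side is now bounded by a \emph{reciprocal}, the natural input is precisely the inverse-moment bound \eqref{invmoments}, which (after a Jensen step) gives a constant depending only on $d$ and $\rho$. This is where the hypothesis $I\subset\sigma_{\ac}(H_{\mathcal T_d})$ enters. Your approach, by contrast, tries to control the tree side through positive moments, which is the wrong direction for the tools at hand.
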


\begin{proof}
The proof  is based on estimate \eqref{invmoments}, the fact that within the absolutely continuous spectrum the imaginary part of the Green function has finite inverse moments. This was recently proved in \cite{AizWar12}. To apply this result we rely on recursion properties of the Green function on trees and on results from rank-one perturbation theory. 

We fix a graph $G_n$ and a vertex $x_0 \in G_n$. 
Recall the definition of the coefficients $c_j(x_0,I)$ from \eqref{eq:cdef}. Again we write $c_j = c_j(x_0,I)$ for short. By \eqref{eq:randmes} we have, for any $\hat x \in \mathcal T_d$, 
\begin{align*}
\sum_{j = 1}^n c_j \Im \Gamma_{\mathcal T_d}(\hat x, \lambda_j + i \eta;V) &= \sum_{  \lambda_j \in I } |\varphi_j(x_0)|^2 \Im \Gamma_{\mathcal T_d}(\hat x, \lambda_j + i \eta;V) \\
&= \int_I  \Im \Gamma_{\mathcal T_d}(\hat x, \lambda + i \eta;V) \mu_{n,x_0}(d\lambda;V) \, .
\end{align*}
We use the Stieltjes inversion formula, see for example \cite[Theorem 3.21]{Tes09}, to write the spectral measure $\mu_{n,x_0}$ in terms of the Green function and we obtain
\begin{equation}
\label{eq:stieltjes}
\sum_{j = 1}^n c_j \Im \Gamma_{\mathcal T_d}(\hat x, \lambda_j + i \eta;V) = \lim_{\epsilon \da 0} \frac 1 \pi \int_I  \Im \Gamma_{\mathcal T_d}(\hat x, \lambda + i \eta;V) \, \Im \Gamma_{n}(x_0,\lambda+i\epsilon;V) \, d\lambda \, .
\end{equation}
To estimate the expectation of the right-hand side, we use the following results from rank-one perturbation theory \cite{DeyLevSou85,SimWol86,Aiz94}. This allows to analyze the dependence on the single random variable $\omega_{x_0}$.

First we rewrite $ \Im \Gamma_{n}(x_0,\lambda+i\epsilon;V)$:
For a realization of the random potential $V=(\omega_x)_{x \in G_n}$
we denote by $\hat V$ the same collection of random variables with $\omega_{x_0}$ replaced by zero. Then 
$$
H_n(V) = H_n(\hat V) + \omega_{x_0} \delta_{x_0}  \, ,
$$
where $\delta_{x_0}(x_0) = 1$  and $\delta_{x_0}(x) = 0$ for $x \neq x_0$.  The resolvent identity yields, for  $z \in \C_+$,
\[
\frac 1{H_n(V) -z} - \frac 1{H_n(\hat V)-z} = \frac 1{H_n(V) -z} \, \omega_{x_0} \delta_{x_0} \, \frac 1{H_n(\hat V)-z}
\]
and thus
\[
\Gamma_n(x_0,z;V) = \frac{\Gamma_n(x_0,z;\hat V)}{1+\omega_{x_0} \Gamma_n(x_0,z;\hat V)} = \frac 1 {\omega_{x_0} - \Xi_{x_0}(z)}
\]
with 
\[
\Xi_{x_0}(z) =  - \Gamma_n(x_0,z;\hat V)^{-1} \, .
\]
It follows that
\begin{equation}
\label{eq:rankonerep}
\Im \Gamma_n(x_0, \lambda+i\epsilon; V) = \frac{\Im \Xi_{x_0}(\lambda+i\epsilon)}{(\omega_{x_0}-\Re \Xi_{x_0}(\lambda+i\epsilon) )^2 + \Im \Xi_{x_0}(\lambda+i\epsilon)^2} \, .
\end{equation}
We emphasize that $\Xi_{x_0}$ is independent of $\omega_{x_0}$. We also note that the limit $\Xi_{x_0}(\lambda) = \lim_{\epsilon \da 0} \Xi_{x_0}(\lambda+i\epsilon)$ exists almost everywhere with respect to Lebesgue measure, see for example \cite[Theorem 3.23]{Tes09}.

Next, we estimate $ \Im \Gamma_{\mathcal T_d}(\hat x, \lambda + i \eta;V)$:
If we remove the vertex $\hat x$ from the tree $\mathcal T_d$, it is decomposed into $d$ disjoint infinite rooted trees that are rooted at the the nearest neighbors of $\hat x$.   (We remark that these trees are no longer regular, since the degree at the root equals $d-1$.) Let $N(\hat x) = \{ y \in \mathcal T_d \, : \, d(\hat x,y)=1 \}$ denote the set of nearest neighbors of $\hat x$. For $y \in N(\hat x)$, let $T_y$ denote the rooted tree with root at $y$. In the same way as on the regular tree we define the bounded self-adjoint operators $H_{T_y}(V)$ with domain $\ell^2(T_y)$. For $u \in T_y$ and $z \in \C_+$ we write 
\[
\Gamma_{T_y}(u,z;V) = \lk \delta_u, \lk H_{T_y}(V) - z \rk^{-1} \delta_u \rk_{\ell^2(T_y)}
\]
for the respective Green function. The trees $T_y$ with $y \in N(\hat x)$ are not connected to each other, hence the random variables $\Gamma_{T_y}(y,z;V)$, $y \in N(\hat x)$, are independent.

Employing the resolvent equation one can derive the recursion formula
$$
\Gamma_{\mathcal T_d}(\hat x,z;V) = \frac 1{\omega_{\hat x} - z - \sum_{y \in N(\hat x)} \Gamma_{T_y}(y,z;V) }\, .
$$
Taking the imaginary part, we see that for all $z \in \C_+$
$$
\Im \Gamma_{\mathcal T_d}(\hat x, z; V) = |\Gamma_{\mathcal T_d}(\hat x,z;V)|^2 \lk \sum_{y \in N(\hat x)}  \Im \Gamma_{T_y}(y,z;V) + \Im z \rk \, .
$$
Applying the recursion formula one also gets that 
\[
|\Gamma_{\mathcal T_d}(\hat x,z;V)|^2   \leq \frac{1}{\lk \Im \lk \omega_{\hat x} - z - \sum_{y \in N(\hat x)} \Gamma_{T_y}(y,z;V) \rk \rk^2 } = \frac{1}{\lk \sum_{y \in N(\hat x)}   \Im \Gamma_{T_y}(y,z;V) + \Im z  \rk^2}
\]
and we obtain
\begin{equation}
\label{eq:imestinv}
\Im \Gamma_{\mathcal T_d}(\hat x, z ; V)  \leq  \frac1{\sum_{y \in N(\hat x)}  \Im \Gamma_{T_y}(y,z;V) } \, .
\end{equation}

Let again $\hat x_0 \in \mathcal T_d$ denote the vertex corresponding to $x_0 \in G_n$ under the universal cover and let us for the moment assume that $\hat x \neq \hat x_0$. Then there is one unique vertex $y^* \in N(\hat x)$ such that the corresponding rooted tree $T_{y^*}$ contains $\hat x_0$.
Let $N^+(\hat x) = \{ y \in N(\hat x) \, : \, \hat x_0 \notin T_y\}$ denote the subset of all other nearest neighbors of $\hat x$.  In particular, we find that $N^+(\hat x)$ contains $d-1$ vertices. For $y \in N^+(\hat x)$ the tree $T_y$ does not contain $\hat x_0$, thus  the collection of random variables
\[
\lk \Gamma_{T_y}(y,\lambda+i\eta;V) \rk_{y \in N^+(\hat x)} 
\]
is independent of $\omega_{\hat x_0}$, the value of the random potential $V$ at $\hat x_0$.
If $\hat x = \hat x_0$, then all random variables $\Gamma_{T_y}(y,\lambda+i\eta,V)$ with $y \in N(\hat x)$ are independent of $\omega_{\hat x_0}$ and we can continue the proof in the same way with $N^+(\hat x)$ replaced by $N(\hat x)$.

We note that, for $z \in \C_+$, the imaginary part of the Green function is positive. Hence, in view of \eqref{eq:imestinv} we have
\begin{equation}
\label{eq:imestinv2}
\Im \Gamma_{\mathcal T_d}(\hat x, z; V)  \leq  \frac1{ \sum_{y \in N^+(\hat x)}  \Im \Gamma_{T_y}(y,z;V) } \, .
\end{equation}
Combining \eqref{eq:stieltjes}, \eqref{eq:rankonerep}, and \eqref{eq:imestinv2} we arrive at the bound
\begin{align*}
&\sum_{j = 1}^n c_j \Im \Gamma_{\mathcal T_d}(\hat x, \lambda_j + i \eta;V) \\
& \leq \lim_{\epsilon \da 0} \frac 1 \pi \int_I \frac {1 }{\sum_{y \in N^+(\hat x)}  \Im \Gamma_{T_y}(y,\lambda+i\eta;V)  } \frac{\Im \Xi_{x_0}(\lambda+i\epsilon)}{(\omega_{x_0}-\Re \Xi_{x_0}(\lambda+i\epsilon) )^2 + \Im \Xi_{x_0}(\lambda+i\epsilon)^2} d\lambda \, .
\end{align*}

Recall that $\omega_{\hat x_0} = \omega_{x_0}$ and that the random variables $\Xi_{x_0}(\lambda+i\epsilon)$ and $\Gamma_{T_y}(y,\lambda+i\eta;V)$ with $y \in N^+(\hat x)$ are independent of $\omega_{\hat x_0}$. Hence, we condition on the random potential $V = (\omega_y)_{y \in \mathcal T_d}$ at all other vertices and take the expectation with respect to $\omega_{\hat x_0}$ only. By dominated convergence, we find
\begin{align*}
&\E \left[ \left. \sum_{j = 1}^n c_j \Im \Gamma_{\mathcal T_d}(\hat x, \lambda_j + i \eta;V)  \right|  \lk \omega_y \rk_{y \neq \hat x_0} \right]\\
&\leq \lim_{\epsilon \da 0}  \frac 1 \pi \int_\R \int_I \frac {1}{\sum_{y \in N^+(\hat x)}  \Im \Gamma_{T_y}(y,\lambda+i\eta;V)  }  \frac{\Im \Xi_{x_0}(\lambda+i\epsilon)}{(v-\Re \Xi_{x_0}(\lambda+i\epsilon) )^2 + \Im \Xi_{x_0}(\lambda+i\epsilon)^2}  d\lambda \, \rho(v) \, dv \, .
\end{align*}
Now we estimate $\rho(v) \leq \| \rho \|_\infty$ and apply Fubini's theorem to perform the integration in $v$. Since
\[
\int_\R  \frac{\Im \Xi_{x_0}(\lambda+i\epsilon)}{(v-\Re \Xi_{x_0}(\lambda+i\epsilon) )^2 + \Im \Xi_{x_0}(\lambda+i\epsilon)^2} dv = \pi
 \] 
we obtain
$$
\E \left[ \left. \sum_{j = 1}^n c_j \Im \Gamma_{\mathcal T_d}(\hat x, \lambda_j + i \eta;V)  \right|  \lk \omega_y \rk_{y \neq \hat x_0} \right] \leq \| \rho \|_\infty
\int_I   \frac {1}{\sum_{y \in N^+(\hat x)}  \Im \Gamma_{T_y}(y,\lambda+i\eta;V)  } \, d\lambda \, . 
$$
Taking now the expectation with respect to $(\omega_y)_{y \neq \hat x_0}$ yields
$$
\E \left[ \sum_{j = 1}^n c_j \Im \Gamma_{\mathcal T_d}(\hat x, \lambda_j + i \eta,V) \right] \leq  \| \rho \|_\infty \int_I   \E \left[ \frac {1}{\sum_{y \in N^+(\hat x)}  \Im \Gamma_{T_y}(y,\lambda+i\eta,V)  } \right] d\lambda \, .
$$
Finally, we use that the set $N^+(x)$ contains $d-1$ elements. Jensen's inequality tells us that
$$
\sum_{y \in N^+(\hat x)}  \Im \Gamma_{T_y}(y,\lambda+i\eta;V) \geq (d-1) \prod_{y \in N^+(\hat x)} \Im \Gamma_{T_y}(y,\lambda+i\eta;V) ^{1/(d-1)}
$$
and using the fact that the random variables  $\Im \Gamma_{T_y}(y,\lambda+i\eta;V)$, $y \in N^+(\hat x)$, are independent we conclude 
$$
\E \left[ \sum_{j = 1}^n c_j \Im \Gamma_{\mathcal T_d}(\hat x, \lambda_j + i \eta;V) \right] \leq \frac{ \| \rho \|_\infty}{d-1} \int_I \prod_{y \in N^+(\hat x)}\E \left[ \Im \Gamma_{T_y}(y,\lambda+i\eta;V)^{-1/(d-1)} \right] d\lambda \, .
$$ 
Hence, applying \eqref{invmoments} yields the claim of the lemma.
\end{proof}


\appendix

\section{Auxiliary results}

In the appendix we collect some known results that are used in the previous sections. We indicate where proofs can be found in the literature.

\subsection{Orthonormal polynomials and Gaussian quadrature}
\label{ap:quadrature}

In Section \ref{sec:detest} we repeatedly used the following quadrature formula for polynomials. A proof and further references can be found for example in \cite[Chapter 1.4.1]{Akh65}.

\begin{lemma}
\label{lem:quadrature}
Let $\sigma$ be a measure on the real line with finite moments and let $(P_n)_{n \in \N_0}$ denote the orthonormal polynomials with respect to $\sigma$. 
For arbitrary $M \in \N$ and $s \in \R$ set $\hat P_{M+1} = P_{M+1} + s P_M$ and let $\lambda_1 < \lambda_2 < \dots < \lambda_{M+1}$ denote the zeros of $\hat P_{M+1}$.

Then the identity
$$
\int_\R R(\lambda) \sigma(d\lambda) = \sum_{k=1}^{M+1} \frac{R(\lambda_k)}{\sum_{n=0}^M P_n(\lambda_k)} 
$$
holds for any polynomial $R$ of degree less or equal than $2M$. 
\end{lemma}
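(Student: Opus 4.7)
The plan is to follow the classical Chebyshev–Markov–Stieltjes approach: establish quasi-orthogonality of $\hat P_{M+1}$, use it to reduce the integral to that of a Lagrange interpolant, then identify the resulting weights with the Christoffel numbers $(\sum_{n=0}^{M}P_n(\lambda_k)^2)^{-1}$.

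First I would verify that $\hat P_{M+1}=P_{M+1}+sP_M$ is orthogonal with respect to $\sigma$ to every polynomial of degree at most $M-1$. This is immediate from the orthogonality of the family $(P_n)$: $P_{M+1}$ is orthogonal to all polynomials of degree $\le M$, while $P_M$ is orthogonal to those of degree $\le M-1$. Next I would show that $\hat P_{M+1}$ has exactly $M+1$ distinct real zeros. Indeed, let $\mu_1,\dots,\mu_j$ be the distinct real points in the support of $\sigma$ at which $\hat P_{M+1}$ changes sign; then $\hat P_{M+1}(\lambda)\prod_{i=1}^{j}(\lambda-\mu_i)$ does not change sign on $\supp\sigma$, so its integral against $\sigma$ is nonzero. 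Quasi-orthogonality forces $j\ge M+1$, and since $\deg \hat P_{M+1}=M+1$, all zeros are real and simple. This justifies the enumeration $\lambda_1<\dots<\lambda_{M+1}$.

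The key algebraic step is polynomial division: any polynomial $R$ of degree $\le 2M$ can be written as $R=Q\hat P_{M+1}+S$ with $\deg Q\le M-1$ and $\deg S\le M$. Integrating against $\sigma$ and invoking quasi-orthogonality kills the first term, so
\begin{equation*}
\int_\R R(\lambda)\,\sigma(d\lambda)=\int_\R S(\lambda)\,\sigma(d\lambda).
\end{equation*}
Since $\deg S\le M$ and $S(\lambda_k)=R(\lambda_k)$ for each $k$ (because $\hat P_{M+1}(\lambda_k)=0$), Lagrange interpolation through the $M+1$ nodes gives $S(\lambda)=\sum_{k=1}^{M+1}R(\lambda_k)L_k(\lambda)$ with $L_k(\lambda)=\prod_{j\ne k}(\lambda-\lambda_j)/(\lambda_k-\lambda_j)$. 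Hence
\begin{equation*}
\int_\R R(\lambda)\,\sigma(d\lambda)=\sum_{k=1}^{M+1}R(\lambda_k)\,w_k,\qquad w_k:=\int_\R L_k(\lambda)\,\sigma(d\lambda).
\end{equation*}

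The remaining and technically trickiest step is identifying the weights: $w_k=(\sum_{n=0}^{M}P_n(\lambda_k)^2)^{-1}$. Here I would argue by two matching inequalities. For the lower bound, apply the quadrature rule to the degree-$2M$ polynomial $L_k^2$, yielding $\int L_k^2\,d\sigma=\sum_j w_j L_k(\lambda_j)^2=w_k$. Setting $P_*(\lambda)=K_M(\lambda_k,\lambda)/K_M(\lambda_k,\lambda_k)$ with the reproducing kernel $K_M(x,y)=\sum_{n=0}^{M}P_n(x)P_n(y)$, one computes $\int P_*^2\,d\sigma=1/K_M(\lambda_k,\lambda_k)$ from the reproducing property $\int K_M(x,\cdot)P\,d\sigma=P(x)$ for $\deg P\le M$. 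Since $P_*(\lambda_k)=1$, the extremal (minimum-norm) property of the Christoffel function forces $w_k=\int L_k^2\,d\sigma\ge 1/K_M(\lambda_k,\lambda_k)$. For the reverse direction, apply the already-established quadrature to $P_*^2$ itself (of degree $2M$) to obtain $1/K_M(\lambda_k,\lambda_k)=\sum_j w_j P_*(\lambda_j)^2\ge w_k$. Combining yields equality, which gives the stated formula. The main obstacle here is arranging the Christoffel-number identification without circularity: one must be careful to use the quadrature rule only for polynomials on which it has already been verified — in each appeal, the relevant polynomial is either of degree $\le M$ (handled by pure Lagrange interpolation with no quasi-orthogonality needed) or is the square of such a polynomial (handled once the weights $w_k$ are abstractly defined).
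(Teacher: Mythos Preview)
The paper does not actually prove this lemma; it cites \cite{Akh65}, Chapter 1.4.1, for a proof. Your argument is the standard one found there and is essentially correct.

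One small slip: quasi-orthogonality of $\hat P_{M+1}$ only guarantees orthogonality to polynomials of degree $\le M-1$, so your sign-change argument yields $j\ge M$, not $j\ge M+1$. The conclusion is nevertheless unaffected: $j\ge M$ real zeros of odd multiplicity in a real polynomial of degree $M+1$ forces the remaining zero to be real (complex roots come in conjugate pairs) and all zeros to be simple (an even-multiplicity zero would use up at least two of the remaining degree). Alternatively, one can invoke the Christoffel--Darboux identity to see that $P_{M+1}/P_M$ is strictly increasing between consecutive poles and surjects onto $\mathbb R$ on each of the $M+1$ intervals it determines, so $P_{M+1}/P_M = -s$ has exactly $M+1$ real solutions for every $s\in\mathbb R$.

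Your identification of the weights via the two matching inequalities $w_k=\int L_k^2\,d\sigma \ge 1/K_M(\lambda_k,\lambda_k)$ and $1/K_M(\lambda_k,\lambda_k)=\int P_*^2\,d\sigma=\sum_j w_j P_*(\lambda_j)^2\ge w_k$ is clean and avoids circularity, since both appeals use only the abstract quadrature rule with weights $w_k=\int L_k\,d\sigma$ already established.
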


\subsection{Spectrum and Green function on the infinite tree}
\label{ap:wegner}

Here we mention some results about the spectrum of the random Schr\"odinger operator $H_{\mathcal T_d}(V)$ defined in Section \ref{ssec:rs} and about the Green function $\Gamma_{\mathcal T_d}(x,z;V)$ defined in Section \ref{ssec:green}. We refer to the books \cite{Kir89,CarLac90,PasFig92,Tes09} for more information and further references.

Recall that the random potential $V$ is defined as a multiplication operator
$$
\lk V \phi \rk (x) = \omega_x \phi(x) \, , \quad \phi \in \ell^2(\mathcal T_d) \, ,\quad x \in \mathcal T_d \, ,
$$ 
where $\lk \omega_x \rk_{x \in \mathcal T_d}$ are independent identically distributed real random variables with density $\rho$. Hence one can refer to the  theory of ergodic operators to determine the spectrum of $H_{\mathcal T_d}$. In \cite{KunSou80,KirMar82} it is shown that the spectrum corresponds almost surely to the set-sum of the spectrum of the adjacency matrix and the support of $\rho$. 
On the tree the spectrum of $A_{\mathcal T_d}$ is given by $(-2\sqrt{d-1},2\sqrt{d-1} )$. So under the assumption $\textnormal{supp}(\rho) = [-\rho_s,\rho_s]$ with $\rho_s < \infty$ the spectrum of $H_{\mathcal T_d}(V)$ is almost surely given by the deterministic set $[-2 \sqrt{d-1}-\rho_s, 2\sqrt{d-1}+\rho_s]$. In particular, the spectral measure $\mu_{\mathcal T_d,x}$ satisfies, for all $x \in \mathcal T_d$,
\begin{equation}
\label{spectralsupport}
\textnormal{supp} ( \mu_{\mathcal T_d,x}) =  [ -2 \sqrt{d-1}-\rho_s, 2\sqrt{d-1}+\rho_s ] 
\end{equation}
almost surely and this implies $\textnormal{supp} ( \sigma_\rho) =  [-2 \sqrt{d-1}-\rho_s, 2\sqrt{d-1}+\rho_s]$.

It was noticed by Wegner \cite{Weg81} that regularity of the distribution of $\omega_x$ implies regularity of the density of states measure: Under the assumption $\| \rho \|_\infty < \infty$ one has
\begin{equation}
\label{wegner}
\left\| \frac{d\sigma_\rho}{d\lambda} \right\|_\infty \leq \|\rho\|_\infty \, .
\end{equation}

Along with the spectrum also the spectral components, the pure-point spectrum, the singular continuous spectrum, and the absolutely continuous spectrum form almost surely deterministic sets. It is the subject of extensive research to determine the location of these spectral components and we refer to \cite{War12} for an overview of results and further references.

One useful criterion for absolutely continuous spectrum is that the imaginary part of the green function does not vanish. So one considers the set
$$
\sigma_\ac(H_{\mathcal T_d}) = \left\{ \lambda \in \R \, : \, \p \left[ \lim_{\eta \da 0} \Im \Gamma_{\mathcal T_d}(x,\lambda + i \eta;V) > 0 \right] > 0 \right\}
$$
that also forms a deterministic set that does not depend on $x \in \mathcal T_d$. For almost every realization of the randomness $\sigma_\ac(H_{\mathcal T_d})$ is the support of the absolutely continuous component of the spectrum \cite{AizWar13,AizWar12}.  Within this set the imaginary part of the Green function has finite inverse moments. This fact was recently established in \cite[Theorem 2.4]{AizWar12}: Let $I \subset \sigma_{\ac}(H_{\mathcal T_d})$ be a bounded and measurable set. Consider an infinite rooted tree $T$ and let $0$ denote the vertex at the root. Then there is a $\delta > 0$ such that the estimate
\begin{equation}
\label{invmoments}
\esssup_{\lambda \in I, \eta > 0} \E \left[ (\Im \Gamma_{T}(0,\lambda+i \eta;V))^{-3-\delta} \right] < \infty  
\end{equation}
holds.

\subsection*{Acknowledgments} The author wants to thank Michael Aizenman, Sasha Sodin and Philippe Sosoe for helpful discussions. Financial support from DFG grant GE 2369/1-1 and NSF grant PHY-1122309 is gratefully acknowledged.




\end{document}